\documentclass[a4paper,twocolumn,notitlepage,nofootinbib,superscriptaddress]{revtex4-2}
\usepackage{amsmath}
\usepackage{amssymb}
\usepackage{amsthm}
\usepackage{graphicx}
\usepackage{xcolor}
\usepackage[english]{babel}
\usepackage{csquotes}
\usepackage{braket}
\usepackage{hyperref}
\usepackage{mathtools}
\usepackage{booktabs}
\usepackage{enumitem}
\usepackage{verbatim}
\usepackage[capitalize]{cleveref}
\usepackage{soul}

\bibliographystyle{unsrtnat}

\newcommand{\diff}{\ensuremath\mathrm{d}}

\newcommand{\probP}{\text{I\kern-0.15em P}}

\newcommand*{\pipe}{\ensuremath{\, | \, }}
\newcommand*{\fatpipe}{\ensuremath{\, \| \, }}

\newcommand{\Tr}{\ensuremath{\operatorname{Tr}}}

\providecommand{\myvec}[1]{\ensuremath{\boldsymbol{#1}}}

\providecommand{\bb}{\ensuremath{\myvec{b}}}

\providecommand{\xx}{\ensuremath{\myvec{x}}}

\providecommand{\ttheta}{\ensuremath{\myvec{\theta}}}



\providecommand{\calD}{\ensuremath{\mathcal{D}}}

\providecommand{\calG}{\ensuremath{\mathcal{G}}}

\providecommand{\calM}{\ensuremath{\mathcal{M}}}
\providecommand{\calN}{\ensuremath{\mathcal{N}}}

\providecommand{\calU}{\ensuremath{\mathcal{U}}}
\providecommand{\calV}{\ensuremath{\mathcal{V}}}

\providecommand{\calX}{\ensuremath{\mathcal{X}}}
\providecommand{\calY}{\ensuremath{\mathcal{Y}}}


\providecommand{\bbE}{\ensuremath{\mathbb{E}}}

\providecommand{\bbI}{\ensuremath{\mathbb{I}}}

\providecommand{\bbP}{\ensuremath{\mathbb{P}}}

\providecommand{\bbR}{\ensuremath{\mathbb{R}}}


\AtBeginDocument{
\heavyrulewidth=.08em
\lightrulewidth=.05em
\cmidrulewidth=.03em
\belowrulesep=.65ex
\belowbottomsep=0pt
\aboverulesep=.4ex
\abovetopsep=0pt
\cmidrulesep=\doublerulesep
\cmidrulekern=.5em
\defaultaddspace=.5em
}

\newtheorem{theorem}{Theorem}
\newtheorem{lemma}[theorem]{Lemma}

\newtheorem{definition}[theorem]{Definition}

\newcommand{\dccqs}{Dahlem Center for Complex Quantum Systems, Freie Universit{\"a}t Berlin, 14195 Berlin, Germany}
\newcommand{\hzb}{Helmholtz-Zentrum Berlin f{\"u}r Materialien und Energie, 14109 Berlin, Germany}
\newcommand{\hhi}{Fraunhofer Heinrich Hertz Institute, 10587 Berlin, Germany}
\newcommand{\icfo}{ICFO-Institut de Ciencies Fotoniques, The Barcelona Institute of Science and Technology, 08860 Castelldefels (Barcelona), Spain}
\newcommand{\eurecat}{Eurecat, Centre Tecnològic de Catalunya, Multimedia Technologies, 08005 Barcelona, Spain}

\begin{document}	
    \title{Single-shot quantum machine learning}
    \date{\today}
    
    \author{Erik Recio-Armengol}
    \affiliation{\icfo}
    \affiliation{\eurecat}
    
    \author{Jens Eisert}
    \affiliation{\dccqs}
    \affiliation{\hhi}
    \affiliation{\hzb}
    
    \author{Johannes~Jakob~Meyer}
    \affiliation{\dccqs}

    \begin{abstract}
        Quantum machine learning aims to improve learning methods through the use of quantum computers. 
        If it is to ever realize its potential, many obstacles need to be overcome.
        A particularly pressing one arises at the prediction stage because the outputs of quantum learning models are inherently random. This creates an often considerable overhead, as many executions of a quantum learning model have to be aggregated to obtain an actual prediction.
        In this work, we analyze when quantum learning models can evade this issue and produce predictions in a near-deterministic way -- paving the way to single-shot quantum machine learning.
        We give a rigorous definition of single-shotness in quantum classifiers and show that the degree to which a quantum learning model is near-deterministic is constrained by the distinguishability of the embedded quantum states used in the model.
        Opening the black box of the embedding, we show that if the embedding is realized by quantum circuits, a certain depth is necessary for single-shotness to be even possible.
        We conclude by showing that quantum learning models cannot be single-shot in a generic way and trainable at the same time.
    \end{abstract}

    \maketitle
	
Machine learning is a burgeoning field where rapid advances regularly overturn assumptions on what can and cannot be learned by classical computers. This ongoing success story spurs interest in \emph{quantum machine learning}, its intersection with quantum computing, another field that has seen tremendous technical progress recently. 
Investigating if quantum computers can be used to construct learning models that somehow outperform their classical counterparts has become one of the principal avenues of research in quantum machine learning. 

While the intrinsic quantum nature of such models gives them at least some theoretical potential to push beyond the boundary of what is classically possible~\cite{sweke2021quantum,liu2021rigorous,pirnay2023superpolynomial,gyurik2023exponential}, it also comes with inherent downsides. 
Previous research in that direction mostly focused on the training stage, where issues like barren plateaus~\cite{mcclean2018barren} complicate the optimization of quantum learning models. Issues do, however, also appear at the inference stage, i.e.\ when a prediction is to be produced by a quantum learning model.
A quantum learning model that deserves that name needs to perform some sort of manipulation of a quantum system. To extract a classical label, however, a measurement needs to be performed. It is the nature of quantum mechanics that the outcomes of such measurements are inherently probabilistic. For the model to solve a learning problem, for example to correctly classify an image, the output needs to be deterministic.
In practice, a quantum learning model is run a large number of times to circumvent the probabilistic outcomes of measurements and to extract a prediction, e.g.\ in the form of an expectation value of a suitably chosen observable. This issue, an instance of what has been called the \emph{measurement problem}, even persists if the model is run on a fault-tolerant quantum computer. 

\begin{figure}[h!]
    \centering
    \includegraphics{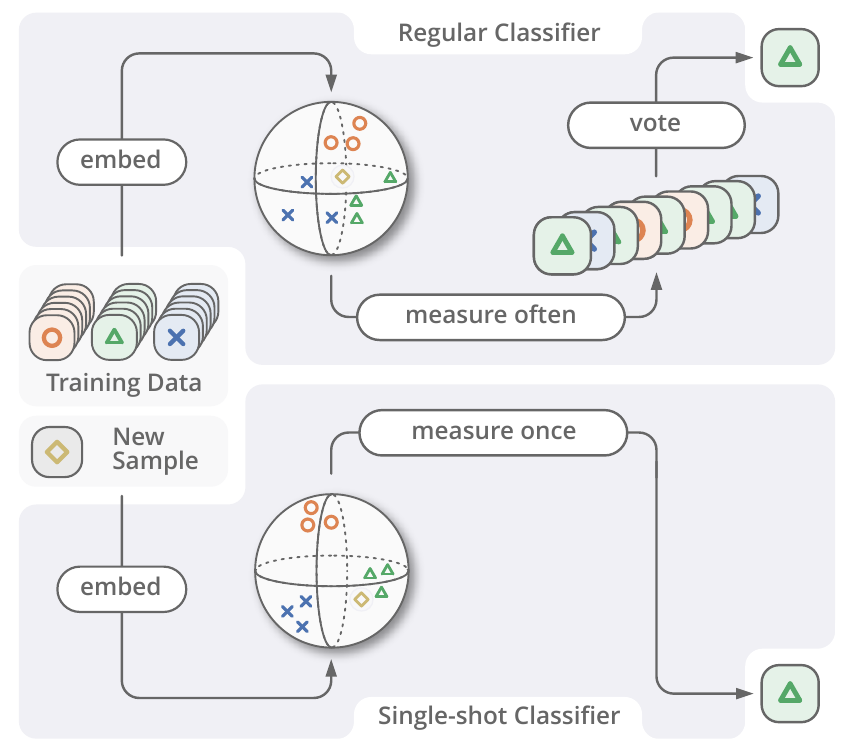}
    \caption{A depiction of the intuitive difference between a regular quantum classifier (top panel) and a single-shot one (bottom panel). In both cases, data is embedded into a quantum system through a quantum feature map. In a regular classifier, the procedure that extracts the label has to be repeated often and then aggregated into a prediction by a majority vote. In a single-shot classifier on the other hand, a single pass through the quantum model is sufficient to extract the label near-deterministically.}
    \label{fig:regular_vs_single_shot}
\end{figure}

The measurement problem means that quantum learning models usually come with overheads that are not present in classical learning models. This limits their appeal in a stark way, especially since this not only increases the \emph{time} it takes to produce a prediction but also the \emph{cost}.
In this work, we study if this particular limitation can be circumvented, specifically under which conditions it is possible to extract a prediction from a single run of a quantum learning model (see \cref{fig:regular_vs_single_shot}). We establish a rigorous definition of such \emph{single-shot models} and use tools from quantum hypothesis testing to understand their ultimate limitations.
If a quantum classifier is constructed using a quantum circuit, we establish that \emph{single-shotness} can only be possible if the circuit has a certain depth.
We treat both the noise-free and the noisy case. We complement our result by establishing that learning models that exhibit the single-shot property generically are overly expressive and therefore hard to train.

To comprehend the full potential of quantum learning models, efforts have been put into studying their potential and limitations. The fact that quantum learning models, in particular quantum classifiers, can be analyzed through the lens of multi-hypothesis testing has been put to use before in Refs.~\cite{cheng2015learnability,arunachalam2017optimal,lloyd2020quantum,weber2021optimal,du2023problem-dependent}. Ref.~\cite{du2023problem-dependent} remarks that there is a trade-off between expressivity and generalization performance. Reducing the measurement overhead of a quantum learning model has already been attempted by trying to reduce the variance of observables through training~\cite{kreplin2023reduction}.

This work is structured as follows: We start the work with expositions on quantum classifiers (\cref{sec:quantum_classifiers}) and quantum multi-hypothesis testing (\cref{sct:multi-hyp}). We go on to introduce the concept of a single-shot quantum learning model in \cref{sct:single_shot_qml}. In the subsequent \cref{sct:bounds}, we use the described concepts and tools to come up with bounds that establish that shallow models cannot have the single-shot property. Finally, in \cref{sect:hardness}, we show that a quantum learning model that achieves a generic single-shot property is hard to learn. We conclude the paper in \cref{sec:conclusion}.

\section{Quantum classifiers}\label{sec:quantum_classifiers}

A classification task can be formalized by considering data $\xx$ from a data domain $\calX$ with labels $y$ from a label domain $\calY$ which we assume to be discrete. Note that the data domain should not be seen as the ambient space, which is usually $\bbR^d$, but the subset of it which corresponds to actually valid datapoints.
In the framework of statistical learning theory, we assume that there is a joint distribution of data and labels $p(\xx,y)$ from which we obtain samples to train our classifier. We will also use the marginal distribution of the data alone, $p(\xx) = \sum_{y \in \calY} p(\xx, y)$.

A quantum classifier consists of a \emph{quantum feature map} $\xx \mapsto \rho(\xx)$ that embeds the data into the Hilbert space of a quantum system, and a procedure to extract predictions for the labels from the embedded states (compare \cref{fig:regular_vs_single_shot}). The feature map is usually implemented through a parametrized quantum circuit. 

The procedure that assigns labels to embedded data points consists of measurements and arbitrary post-processing. A particular example that is encountered often in the literature is to measure the expectation value of an observable, e.g.\ a Pauli-Z operator, and then decide for a label based on the expectation value.
On a formal level, we can absorb both the measurements and the post-processing into a POVM whose elements correspond to the different labels, \smash{$\{ \Pi_y \}_{y \in \calY}$}. 
Note that both the embedding as well as the post-processing can be trainable, in the sense that they depend on further parameters $\ttheta$, for example the angles of a parametrized quantum circuit so that $\rho(\xx) = \rho_{\ttheta}(\xx) = \calU(\ttheta, \xx)[\rho_0]$.

A quantum classifier is a particular instance of what we refer to as a \emph{probabilistic classifier}, as it -- contrary to for example a classical neural network -- outputs labels non-deterministically. Abstractly speaking, the output of the classifier $f$ is not a label, but a random variable \smash{$\hat{f}(\xx)$} with values in $\calY$ and distribution \smash{$\bbP[\hat{f}(\xx) = y] = \Tr[ \Pi_y \rho(\xx) ]$}. 
The probabilistic nature is problematic, because we need to assign a unique label to every datapoint. In quantum machine learning, this problem is circumvented by running the classification procedure multiple times and taking a majority vote to predict the most likely label
\begin{align}\label{eqn:def_fx}
    f(\xx) \coloneqq\operatornamewithlimits{argmax}_{y'} \Tr[ \rho(\xx) \Pi_{y'}].
\end{align}
This is also implicitly happening when the expectation value of an observable is used to determine a label, as multiple runs of a quantum experiment are usually needed to approximate the expectation value faithfully. 

\section{Quantum multi-hypothesis testing}\label{sct:multi-hyp}

Quantum classifiers solve a task that is similar in spirit to \emph{quantum multi-hypothesis testing}. This refers to the task of distinguishing quantum states from a set of known states \smash{$\{ \rho_j \}_{j=1}^r$}~\cite{khatri2020principles}. It is an important tool used in quantum information theory because many tasks that seem unrelated to hypothesis testing a priori can be reduced to it. Bounds on the best possible achievable precision in the multi-hypothesis testing task then yield corresponding bounds for the initial task. As we will see shortly, quantum classifiers are no exception to this.

In multi-hypothesis testing, our job is to design a measurement that upon input of the state $\rho_j$ produces the output $j$ with high probability. We can model the most general quantum measurement by a positive operator-valued measurement (POVM) $\{ \Pi_j \}_{j=1}^M$ whose operators satisfy $\Pi_j \geq 0$ and \smash{$\sum_{j=1}^M \Pi_j = \bbI$}.
In the Bayesian version of quantum multi-hypothesis testing, we are given not only the quantum states \smash{$\{ \rho_j \}_{j=1}^r$}, but also the probabilities with which we are given the state, \smash{$\{p_j\}_{j=1}^r$}. Our job is to find a measurement that minimizes the Bayesian multi-hypothesis testing error
\begin{align}\label{eqn:def_multi_hyp_error_bayesian}
    P_e^{*}( \{ p_j \rho_j \}_{j=1}^r) &= \min_{\substack{\{ \Pi_j \} \\ \mathrm{POVM}} } \left\{ 1-\sum_{j=1}^{r} p_j\Tr[\Pi_{j}\rho_{j}]\right\}.
\end{align}

If we do not have knowledge about the prior probabilities of the different states, we can still ask for a worst-case guarantee. This yields the minimax hypothesis testing error which can be guaranteed for any prior distribution of the states:
\begin{align}\label{eq:agnostic_error}
    \overline{P}_e^{*}( \{ \rho_j \}_{j=1}^r) &= \min_{\substack{\{ \Pi_j \} \\ \mathrm{POVM}} } \max_{1 \leq j \leq r} \left\{ 1-\Tr[\Pi_{j}\rho_{j}]\right\}.
\end{align}
The optimal measurement in both the Bayesian and the minimax setting can be computed with a semi-definite program (SDP)~\cite{yuen1975optimum}.

The usefulness of hypothesis testing stems from the fact that we have good lower bounds on the multi-hypothesis testing error. In the case of binary hypothesis testing, $r=2$, we even have a closed form solution.
The Helstrom-Holevo Theorem \cite{helstrom1969quantum} relates the minimal error of binary hypothesis testing to the trace distance of the quantum states. 
\begin{align}\label{helstrom}
    P_e^{*}(p \rho, (1-p) \sigma) = \frac{1}{2} - \frac{1}{2}\lVert p \rho - (1-p) \sigma \rVert_1.
\end{align}
We immediately see that the above expression vanishes for orthogonal states and, assuming without loss of generality that $p\geq 1/2$, takes its maximal value $1 - p$ for $\rho = \sigma$, where $1-p$ amounts to the probability of success for random guessing.

For more than two states, we do not have closed-form solutions. We can, however, still retain the interpretability in terms of pairwise distances given by the Holevo-Helstrom theorem through the following lemma which is inspired by a similar construction in Ref.~\cite{Audenaert_2014}.
\begin{lemma}[Multi-hypothesis testing to binary reduction]\label{lemma_reduction}
    For a given quantum multi-hypothesis testing problem between states $\{ p_i \rho_i \}_{i=1}^r$ we have that
    \begin{align}
    \begin{split}
        P_e^{*}(& \{p_i  \rho_i \}_{i=1}^r ) \\
        & \geq \sum_{i=1}^r p_i \max_{1 \leq j \neq i \leq r} P_e^{*}\left(\frac{p_i}{p_i + p_j} \rho_i, \frac{p_j}{p_i + p_j} \rho_j\right)
    \end{split}\\
        & \geq \min_{1 \leq i \leq r} \max_{1 \leq j \neq i \leq r} P_e^{*}\left(\frac{p_i}{p_i + p_j} \rho_i, \frac{p_j}{p_i + p_j} \rho_j\right).
    \end{align}
\end{lemma}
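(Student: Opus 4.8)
The second inequality is immediate: writing $a_i \coloneqq \max_{j \neq i} P_e^{*}\!\left(\tfrac{p_i}{p_i+p_j}\rho_i, \tfrac{p_j}{p_i+p_j}\rho_j\right)$, the quantity $\sum_i p_i a_i$ is a convex combination of the $a_i$ (since $\sum_i p_i = 1$), hence at least $\min_i a_i$. All the content is therefore in the first inequality, and the plan is to lower-bound the multi-hypothesis error by coarse-graining the optimal measurement down to binary tests. Let $\{\Pi_i\}$ attain $P_e^{*}(\{p_i\rho_i\})$ and write $\alpha_i \coloneqq \Tr[(\bbI - \Pi_i)\rho_i]$ for the per-state error and $\beta_{ij} \coloneqq \Tr[\Pi_i \rho_j]$ for the confusion of labelling $\rho_j$ as an $i$, so that $P_e^{*}(\{p_i\rho_i\}) = \sum_i p_i \alpha_i$.

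The key step, which I expect is the construction borrowed from Ref.~\cite{Audenaert_2014}, is that for any ordered pair $(i,j)$ the two-outcome POVM $\{\Pi_i, \bbI - \Pi_i\}$ is a (generally suboptimal) strategy for discriminating $\rho_i$ from $\rho_j$ with priors $\tfrac{p_i}{p_i+p_j}, \tfrac{p_j}{p_i+p_j}$. By definition of the optimal binary error together with \cref{helstrom}, this yields the per-pair bound
\[
  \tfrac{p_i}{p_i+p_j}\,\alpha_i + \tfrac{p_j}{p_i+p_j}\,\beta_{ij} \;\geq\; P_e^{*}\!\left(\tfrac{p_i}{p_i+p_j}\rho_i, \tfrac{p_j}{p_i+p_j}\rho_j\right).
\]
What makes this usable is that the confusion terms are tied back to genuine errors through completeness of the POVM: since $\Pi_i \leq \bbI - \Pi_j$ we have $\beta_{ij} \leq \alpha_j$, and moreover $\sum_{k \neq i}\beta_{ki} = \alpha_i$. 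First I would fix, for each $i$, an index $j(i)$ attaining $m_i \coloneqq a_i$, and then combine these per-pair estimates into the target $\sum_i p_i \alpha_i \geq \sum_i p_i m_i$.

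The hard part will be this assembly, because the inequality is genuinely not term-by-term: a state can be classified perfectly, $\alpha_i = 0$, while $m_i > 0$, the deficit being forced onto the states it is confused with (as one already sees when several $\rho_i$ coincide). Summing the per-pair bounds over all pairs loses a factor $1/r$, whereas summing only over the selected pairs $(i,j(i))$ leaves residual confusion terms $p_{j(i)}\beta_{i,j(i)}$ of indefinite sign, so neither route closes on its own. The tool that ought to fix this is completeness $\sum_i \Pi_i = \bbI$, which couples the confusion terms across states; concretely I would either (a) group the selected pairs and use $\Pi_i + \Pi_j \leq \bbI$ to bound $\alpha_i + \alpha_j$ from below pairwise, handling a non-involutive selection $i \mapsto j(i)$ by a charging argument, or, more robustly, (b) pass to the SDP dual, where the success probability equals $\min\{\Tr K : K = K\dagg,\ K \geq p_i\rho_i\ \forall i\}$, and build a feasible $K$ from the pairwise Holevo--Helstrom ``operator joins'' $\tfrac12(p_i\rho_i + p_j\rho_j) + \tfrac12\lVert p_i\rho_i - p_j\rho_j\rVert$, each of trace $(p_i+p_j)(1-m_{ij})$, so as to certify $\Tr K \leq 1 - \sum_i p_i m_i$. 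The configuration that stresses any such bookkeeping is a ``matching'' of disjoint, mutually confusable pairs, where every pair independently contributes its full error and single-hub accounting underestimates the bound; reproducing exactly $\sum_i p_i m_i$ in that case is the crux.
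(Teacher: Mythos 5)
Your setup is correct as far as it goes, and it is \emph{not} the paper's approach: you coarse-grain the optimal POVM into binary tests, which is the standard rigorous route, whereas the paper argues via side information. Your per-pair estimate $\tfrac{p_i}{p_i+p_j}\alpha_i + \tfrac{p_j}{p_i+p_j}\beta_{ij} \geq P_e^{*}\bigl(\tfrac{p_i}{p_i+p_j}\rho_i, \tfrac{p_j}{p_i+p_j}\rho_j\bigr)$ is valid, as are the completeness relations $\beta_{ij}\leq\alpha_j$ and $\sum_{k\neq i}\beta_{ki}=\alpha_i$, and your diagnosis of the obstruction (non-involutive selections $i\mapsto j(i)$, i.e.\ hubs and chains, together with the matching configuration forcing exact constants) is accurate. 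But the proposal stops exactly there: the assembly step is the entire content of the lemma, and you only name two candidate repairs without carrying either out. Route (a) is left as an unspecified ``charging argument,'' and route (b) does not dodge the problem, since a dual-feasible $K$ must dominate \emph{all} $p_i\rho_i$ simultaneously, and combining pairwise operator joins into one such $K$ with trace $1-\sum_i p_i m_i$ is the same bookkeeping difficulty in disguise. For the record, your approach can be completed: after re-breaking argmax ties so that the only cycles of the selection graph are mutual pairs, one takes as binary test for state $i$ not $\Pi_i$ alone but the sum of effects over all states whose selection path leads to $i$ (with complementary tests on each mutual pair); the resulting operator inequalities $c_k(\bbI - T_k) + \sum_{i:\, j(i)=k} c_i T_i \leq \bbI - \Pi_k$ with $c_i = p_i/(p_i+p_{j(i)})$ then sum against the states to give the claim. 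None of this is in your write-up, so as it stands it is a plan, not a proof.

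By contrast, the paper's own proof (\cref{appendix_multi_ht_to_binary}) requires no POVM bookkeeping at all: a third party, knowing that $\rho_i$ was drawn (probability $p_i$), announces the hardest alternative $j(i)$, one argues that side information can only reduce the optimal error, and the assisted error is then written as $\sum_i p_i P_e^{*}\bigl(\tfrac{p_i}{p_i+p_{j(i)}}\rho_i, \tfrac{p_{j(i)}}{p_i+p_{j(i)}}\rho_{j(i)}\bigr)$ by treating each announcement as the renormalized binary problem. You should note, however, that this step is only literally an identity when the selection is mutual, $j(j(i))=i$: if not, the announcement $\{i,j(i)\}$ actually reveals the state, and the assisted error is \emph{smaller} than the displayed sum, which is the wrong direction for the chain of inequalities. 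In other words, the difficulty you flagged is real, and it is precisely the point that the paper's genie argument glosses over; your attempt identifies the crux honestly but, unlike a complete proof, does not resolve it.
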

The proof is presented in \cref{appendix_multi_ht_to_binary}. The corresponding result in the minimax setting is straightforward, as the worst-case error can only improve when considering a hypothesis test among fewer alternatives, hence
\begin{align}
    \overline{P}_e^*(\{ \rho_j \}_{j=1}^r ) \geq \max_{i \neq j} \overline{P}_e^*(\rho_i, \rho_j).
\end{align}
Combining the above results with the Holevo-Helstrom theorem, we obtain a lower bound on the error for the multi-hypothesis testing problem in terms of the pair-wise distances of the quantum states. We note that these lower bounds are not necessarily the tightest possible. But they allows us to later connect the limitations on single-shot quantum learning models to interpretable quantities based on the trace distance of quantum states.

\subsection*{Reduction to multi-hypothesis testing for quantum classifiers}\label{section_reduction}

This section is devoted to show that quantum classifiers can be naturally analyzed through the lens of quantum multi-hypothesis testing. For similarly inspired derivations, see also Refs.~\cite{cheng2015learnability,lloyd2020quantum,weber2021optimal,du2023problem-dependent}. 
To do so, we first have a look at the probability that a given quantum classifier outputs the correct label when it is used once. This is nothing but the expected accuracy of the classifier. In the following calculation, we will split the parameter space according to labels, i.e.\ we define $\calX_{y'} \coloneqq \{ \xx \pipe y(\xx) = y' \}$.
\begin{align}
    \bbP[\text{success}] &= \int_\calX \diff p(\xx, y) \, \bbP[ \hat{f}(\xx) = y] \label{eq_reduction_succ} \\ 
        &= \sum_{y' \in \calY} \int_{\calX_{y'}} \diff p(\xx) \, \Tr[ \Pi_{y'} \rho(\xx) ] \\
        &= \sum_{y' \in \calY} \Tr\left[ \Pi_{y'} \int_{\calX_{y'}} \diff p(\xx) \, \rho(\xx) \right] \\
        &= \sum_{y' \in \calY} \Tr\left[ \Pi_{y'} p_{y'} \rho_{y'}\right].
\end{align}
In the last step, we defined the average state for every class
\begin{align}
    p_{y'} &\coloneqq \int_{\calX_{y'}} \diff p(\xx) \\
   \rho_{y'}&\coloneqq \frac{1}{p_{y'}} \int_{\calX_{y'}} \diff p(\xx) \, \rho(\xx).
\end{align}
If we compare the above with \cref{eqn:def_multi_hyp_error_bayesian}, we see that a quantum classifier is equivalent to a quantum multi-hypothesis testing problem for the average states for each class.
Therefore, we can conclude that the error of a quantum classifier is at least the optimal achievable error
\begin{align}
    \bbP[\text{error}] \geq P_e^{*}( \{ p_{y} \rho_{y} \}_{y \in \calY}).
\end{align}
The fact that the embedded states for different classes have to be well-distinguishable to achieve reasonable accuracy -- also measured in terms of other loss functions -- has already been observed in prior works~\cite{cheng2015learnability,lloyd2020quantum,du2023problem-dependent}.

\section{Single-shot quantum machine learning} \label{sct:single_shot_qml}

The fact that we have to run a quantum classifier multiple times, explained in \cref{sec:quantum_classifiers}, leads to the so-called \enquote{measurement problem}, meaning that there are additional overheads compared to classical approaches which stem from the probabilistic nature of quantum mechanics. The purpose of this work is to formalize when we can circumvent this issue by enforcing that the classifier is so sure about the assigned label that running it only once is sufficient, which is when we call the classifier \enquote{single-shot}. 

\subsection*{Single-shot classifiers}

With the notation we introduced, we can define when a classifier is single-shot: exactly when the probability that a random label $\hat{f}(\xx)$ is identical to the majority label $f(\xx)$ assigned by the classifier is very high. Formally, we define
\begin{definition}[Single-shot probabilistic classifier (Bayesian)]\label{def:bayesian_single_shot_classifier}
    For datapoints $\xx$ distributed according to a distribution $p(\xx)$ supported on a data domain $\calX$, we say a probabilistic classifier $f$ is $\delta$-single shot if
    \begin{align}
    \operatornamewithlimits{\mathbb{E}}_{\xx \sim p(\xx)} \left\{ \bbP[ \hat{f}(\xx) = f(\xx)] \right\} \geq 1 - \delta,
    \end{align}
    where the expectation value is taken over the random distribution of datapoints.
\end{definition}
Note that $\bbP[ \hat{f}(\xx) = f(\xx)] = \max_{y \in \calY} \Tr[ \Pi_{y} \rho(\xx) ]$, which means we can rewrite this as
\begin{align}\label{eqn:bayesian_single_shot_condition_rewrite}
    \int \diff p(\xx) \, \max_{y \in \calY} \Tr[ \Pi_{y} \rho(\xx) ] \geq 1 - \delta.
\end{align}

It is worth to spend some time to clarify multiple important points: 
First, being \emph{single-shot} is a property of the classifier that is independent of the labels being actually correct. The single-shot property only captures how \emph{deterministic} the outcomes of a classifier are. Consider for example a classifier that always outputs the same label, it will be perfectly single-shot, i.e.\ have $\delta = 0$, but will have bad classification accuracy.
Second, the single-shot property is necessarily dependent on the learning problem through the data distribution $p(\xx)$: a classifier that has this property for one problem will usually not have it for a different learning problem.
Third, how strict the above single-shot definition actually is crucially depends on the allowed failure probability $\delta$, as \emph{every} classifier will have this property if the failure probability $\delta$ can be chosen arbitrarily close to $1$. We are, naturally, interested in the case where $\delta$ is rather small.
Fourth, we wish to emphasize that being single-shot is a property that we wish to have for a \emph{trained} classifier. It is no property that will be present upon random initialization and has to be enforced through an appropriate training procedure in one way or another.

We refer to \cref{def:bayesian_single_shot_classifier} as the \emph{Bayesian} single-shot definition because we average over the distribution of the inputs $\xx \sim p(\xx)$. We can obtain a more stringent definition that removes this dependence by minimizing over all possible distributions $p(\xx)$ supported on the data domain $\calX$. Here, again, we emphasize that the data domain should be understood not as the whole ambient space of the data (usually $\bbR^d$), but the subset corresponding to valid inputs.
In this case, it is intuitively clear that the \enquote{worst} distribution is the one that always gives our classifier the datapoints for which \smash{$\bbP[ \hat{f}(\xx) = f(\xx) ]$} is minimal. This leads us to the following \emph{agnostic} definition of being single-shot:
\begin{definition}[Single-shot probabilistic classifier (Agnostic)]\label{def:agnostic_single_shot_classifier}
    For datapoints $\xx$ from a data domain $\calX$, we say a probabilistic classifier $f$ is agnostically $\bar\delta$-single-shot if
    \begin{align}
        \min_{\xx \in \calX} \bbP[ \hat{f}(\xx) = f(\xx) ] \geq 1 - \bar\delta,
    \end{align}
    where the probability is taken over the inherent randomness of the classifier.
\end{definition}
Analogously to \cref{eqn:bayesian_single_shot_condition_rewrite}, we can rewrite this as 
\begin{align}
    \min_{\xx \in \calX} \max_{y} \Tr[ \Pi_{y} \rho(\xx) ] \geq 1 - \bar\delta.
\end{align}
The agnostic definition immediately implies the Bayesian one with the same $\delta = \bar\delta$, but it is much more stringent. This is especially clear if we assume the classifier to be continuous in the data, in which case agnostic single-shot classification is not possible with a meaningfully small value of $\bar\delta$ if there are two sets of points with distinct labels assigned by the classifier that touch in parameter space as continuity would imply that at the point where the label flips, both labels are equally likely and thus single-shot classification is not possible.

It is, however, quite intuitive that even a continuously parametrized classifier can be agnostically single-shot on subsets of the data domain, and if the points we see are likely to come from that domain, we still have a pretty good Bayesian single-shot performance. Exactly this thought is formalized by the following lemma, which is just a simple application of the union bound.
\begin{lemma}[Restriction lemma]\label{lem:agnostic_to_bayesian_lifting}
    Let $\hat{f}(\xx)$ be a probabilistic classifier that is agnostically $\bar\delta$-single-shot relative to a subset of the data domain $\bar\calX \subseteq \calX$. Then, for any data-distribution, the classifier is Bayesian $\delta$-single-shot with
    \begin{align}
        \delta \leq \bar\delta + \bbP[ \xx \not\in \bar\calX \pipe \xx \sim p(\xx)].
    \end{align}
\end{lemma}

\subsection*{Geometric Picture}

It is instructive to visualize the concept of single-shot quantum classification. 
To do so, we can define some specific sets of inputs. 
First, we define the set of all inputs that the classifier assigns the same majority label
\begin{align}\label{eqn:def_parameter_space_for_classifier_labels}
    \tilde{\calX}_y \coloneqq \{ \xx \in \calX \pipe f(\xx) = y \}.
\end{align}
Next, for a given failure probability $\delta$, we define the subset of the above set on which the classifier is agnostically $\delta$-single-shot according to \cref{def:agnostic_single_shot_classifier}.
\begin{align}
\begin{split}
    \tilde{\calX}^{\delta}_y \coloneqq \{ \xx \in &\calX \pipe f(\xx) = y, \\ &\bbP[\hat{f}(\xx) = f(\xx)\pipe \xx] \geq 1- \delta \}.
\end{split}
\end{align}
It is quite natural that $\delta \leq \delta'$ implies that \smash{$\tilde{\calX}^{\delta} \subseteq \tilde{\calX}^{\delta'}$}. Plotting the sets \smash{$\tilde{\calX}_y$} amounts to plotting the decision boundaries of the classifier, whereas plotting the sets \smash{$\tilde{\calX}_y^{\delta}$} shows the regions in which the classifier is additionally \enquote{sure} about its decisions.

Naturally, we also care about the true labels of the datapoints, as such we will use the following notation to denote all datapoints with a given label:
\begin{align}
    \calX_y \coloneqq \{ \xx \in \calX \pipe y(\xx) = y \}.
\end{align}
\cref{fig:single_shot_sets} shows an exemplary plot of some sets for a trivalent classification problem for points in $\bbR^2$ that clarifies the nature of the sets $\tilde{\calX}_y^{\delta}$, $\tilde{\calX}_y$ and $\calX_y$.

\begin{figure}
    \centering
    \includegraphics{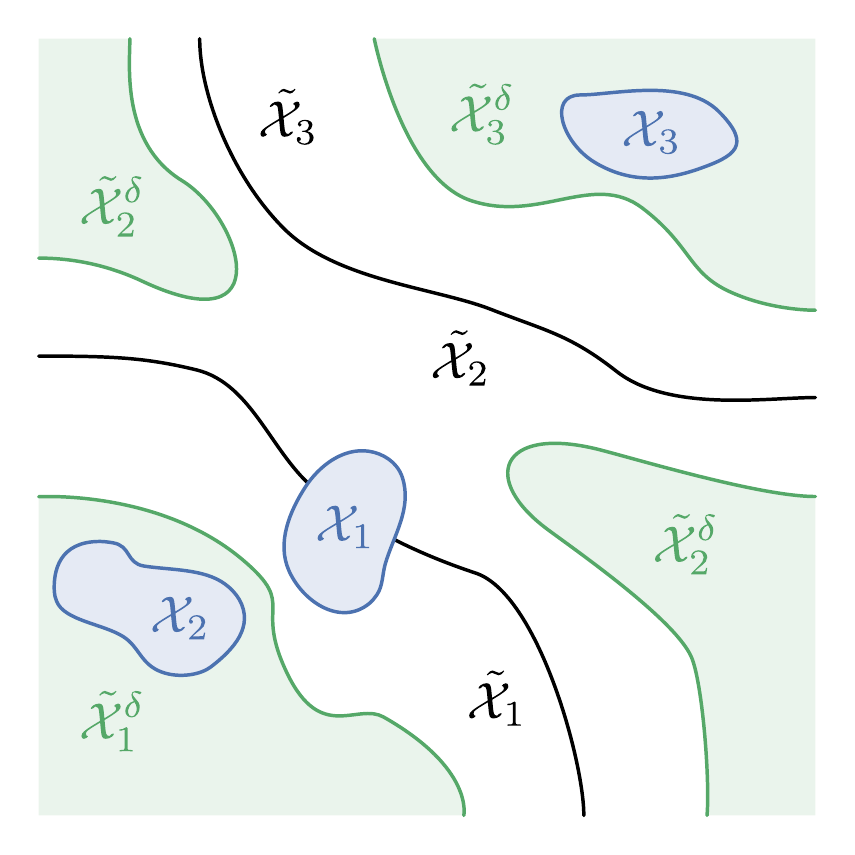}
    \caption{To gain intuition about the notion of single-shot classification, we look at a problem where points in $\calX = \bbR^2$ are classified into three classes. The sets \smash{$\tilde{\calX}_i$} correspond to the different labels assigned by the classifier (boundaries in black). A subset of each of these sets, \smash{$\tilde{\calX}_i^{\delta}$} corresponds to the inputs for which these labels are assigned in a $\delta$-single-shot way (green). The supports of the datapoints with true labels are given by $\calX_i$ (blue).
    We observe three different situations: Data from class 1 is not always correctly classified and labels are not assigned in a $\delta$-single-shot way. Data from class 2 is incorrectly classified, but labels are assigned in a single-shot way. Data from class 3 is correctly classified in a single-shot way.}
    \label{fig:single_shot_sets}
\end{figure}

\subsection*{Relation to Sample Complexity}

We can relate our notion of a probabilistic classifier being single-shot back to the notion of sample complexity that we usually encounter when dealing with the measurement problem in quantum machine learning models. This can be achieved by considering that the process of using $m$ repetitions of the same \enquote{weak} classifier followed by a majority vote can be viewed as a single model that should then have the single-shot property, which amounts to a low failure probability. 
Mathematically speaking, we have the following lemma about boosting the success probability of a weak classifier, i.e.\ one that has the single-shot property only with a high failure probability. 
\begin{lemma}\label{lem:majority_vote_boosting}
    Let $f$ be a probabilistic classifier on a data domain $\calX$. Then, $m$ independent repetitions of $f$ together with a majority vote form a new classifier that is agnostically $\bar\delta$-single-shot with
    \begin{align}
        \bar\delta \leq |\calY| \exp\left(- \frac{1}{2} m \Delta^2\right),
    \end{align}
    where $\Delta$ is the smallest gap between the two largest label probabilities $\Tr[ \Pi_{y} \rho(\xx) ]$ for all $\xx$.
\end{lemma}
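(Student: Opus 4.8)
The plan is to reduce the performance of the boosted classifier to a collection of pairwise comparisons and then apply a standard concentration inequality. Fix a datapoint $\xx$, write $p_y \coloneqq \Tr[\Pi_y \rho(\xx)]$ for the single-run label probabilities, and let $y^\star \coloneqq f(\xx) = \operatornamewithlimits{argmax}_y p_y$ be the most likely label in one shot. Running $f$ a total of $m$ times independently produces counts $N_y$, the number of runs returning label $y$, and the boosted classifier $\hat f_m$ outputs $\operatornamewithlimits{argmax}_y N_y$. Exactly as in the single-run rewriting used in the main text, the single-shot success probability of the boosted classifier equals $\max_y \bbP[\hat f_m(\xx) = y]$, so it suffices to lower bound $\bbP[\hat f_m(\xx) = y^\star]$; I will never need to identify the majority label of the boosted model explicitly.

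First I would bound the failure event by a union over pairwise losses. If the boosted classifier does not return $y^\star$, then some competing label $y \neq y^\star$ must satisfy $N_y \geq N_{y^\star}$ (this inclusion captures adversarial tie-breaking as well), so
\begin{align}
    \bbP[\hat f_m(\xx) \neq y^\star] \leq \sum_{y \neq y^\star} \bbP[N_y \geq N_{y^\star}].
\end{align}
There are fewer than $|\calY|$ terms in this sum, which is where the prefactor $|\calY|$ originates.

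Next I would control each pairwise term. Letting $X_i^{(y)}$ denote the indicator that the $i$-th run returns label $y$, set $Z_i \coloneqq X_i^{(y^\star)} - X_i^{(y)} \in [-1,1]$, so that $N_{y^\star} - N_y = \sum_{i=1}^m Z_i$ is a sum of i.i.d.\ bounded variables with mean $p_{y^\star} - p_y$. Because $y^\star$ carries the largest probability and $y$ at most the second-largest, the mean gap $p_{y^\star} - p_y$ is at least the gap between the two largest probabilities at $\xx$, hence at least $\Delta$. The event $N_y \geq N_{y^\star}$ is $\sum_i Z_i \leq 0$, a deviation below the mean by $m(p_{y^\star}-p_y)$, so Hoeffding's inequality applied to the $Z_i$ (each of range $2$) yields
\begin{align}
    \bbP[N_y \geq N_{y^\star}] \leq \exp\!\left(-\frac{m(p_{y^\star}-p_y)^2}{2}\right) \leq \exp\!\left(-\frac{m\Delta^2}{2}\right).
\end{align}
Combining with the union bound gives $\bbP[\hat f_m(\xx) = y^\star] \geq 1 - |\calY|\exp(-m\Delta^2/2)$, and since $\Delta$ is the minimal gap over all $\xx$, this bound is uniform in $\xx$, so the $\min_{\xx}$ demanded by \cref{def:agnostic_single_shot_classifier} follows immediately.

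The computation is essentially routine; the points that require care are the bookkeeping of tie-breaking in the union bound and ensuring the concentration step reproduces exactly the exponent $\tfrac{1}{2}m\Delta^2$. The latter is the main thing to get right: one must apply Hoeffding to the differenced variables $Z_i$ of range $2$, rather than to $N_{y^\star}$ and $N_y$ separately, and must lower bound the relevant mean gap by $\Delta$ rather than by the raw probability $p_{y^\star}$. A multiplicative Chernoff bound would also close the argument but would not reproduce the clean stated form.
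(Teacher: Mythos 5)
Your proof is correct and rests on the same two ingredients as the paper's own proof -- Hoeffding's inequality plus a union bound over labels -- but the decomposition of the failure event differs. The paper controls each empirical frequency $\hat{r}_y(\xx)$ individually, demanding that the frequency of the top label $y^\star$ not drop by more than $\epsilon = \Delta(\xx)/2$ and that no competing frequency rise by more than $\epsilon$; each of these $|\calY|$ one-sided events costs $\exp(-2m\epsilon^2) = \exp(-\tfrac{1}{2}m\Delta^2)$. You instead union over the $|\calY|-1$ pairwise events $N_y \geq N_{y^\star}$ and apply Hoeffding to the differenced variables $Z_i \in [-1,1]$ of range $2$, which gives $\exp(-\tfrac{1}{2}m(p_{y^\star}-p_y)^2) \leq \exp(-\tfrac{1}{2}m\Delta^2)$ per pair. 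The exponents coincide, so both routes land on exactly the stated bound; yours is marginally tighter in the prefactor ($|\calY|-1$ versus $|\calY|$) and is more explicit about two points the paper leaves implicit: adversarial tie-breaking (via the non-strict inequality $N_y \geq N_{y^\star}$) and the fact that one need not identify the boosted classifier's own majority label, since $\max_y \bbP[\hat f_m(\xx) = y] \geq \bbP[\hat f_m(\xx) = y^\star]$ already suffices for the agnostic definition. Both arguments are pointwise in $\xx$ before the final minimization over the data domain, so either one supports the restriction to a subdomain $\bar\calX$ and the Bayesian variant discussed after the lemma equally well.
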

\begin{proof}
We define the random variables $\{ \hat{r}_{y}(\xx) \}_{y \in \calY}$ as the ratio of the $m$ classifiers that output the label $y$, i.e.
\begin{align}
    \hat{r}_y(\xx) = \frac{1}{m} \sum_{i=1}^m \chi[ \hat{f}_i(\xx) = y ],
\end{align}
understanding that $\bbE[ \hat{r}_y(\xx) ] = \Tr[ \Pi_{y} \rho(\xx) ]$. From Hoeffding's inequality, we have the two one-sided estimates
\begin{align}
    \bbP[ \hat{r}_y(\xx) - \Tr[ \Pi_{y} \rho(\xx) ] \geq \epsilon] &\leq \exp( - 2 m \epsilon^2 ) \\
    \bbP[ \Tr[ \Pi_{y} \rho(\xx) ] - \hat{r}_y(\xx) \geq \epsilon] &\leq \exp( - 2 m \epsilon^2 ).
\end{align}
We could have used a tighter estimate deriving from the Chernoff-Hoeffding Theorem on the sum of i.i.d.\ Bernoulli random variables, but the improvements are not necessary for our argument.
We can guarantee a successful outcome of the majority vote, if we choose $\epsilon$ to be half the gap between the largest and the second-largest value of $\Tr[ \Pi_{y} \rho(\xx) ]$. Formally, we can define the gap as
\begin{align}
    \Delta(\xx) \coloneqq \min_{\calY \ni y' \neq y} \max_{y \in \calY} \Tr[ (\Pi_{y} - \Pi_{y'}) \rho(\xx) ],
\end{align}
and hence choose $\epsilon = \epsilon(\xx) = \Delta(\xx)/2$. We have to use Hoeffding's inequality both to assure that $\hat{r}_y(\xx)$ associated to the largest value is not too low, as well that all of the other options are not too large. Combining them through the union bound leads us to the bound
\begin{align}
    \bar\delta(\xx) \leq |\calY| \exp\left(-{\frac{1}{2}m \Delta(\xx)^2}\right).
\end{align}
Taking the worst case over $\xx$ and defining
\begin{align}
    \Delta = \min_{\xx \in \calX} \Delta(\xx)
\end{align}
yields the statement of the lemma.
\end{proof}
We wish to note that the dependence on the gap instead of the failure probability of the classifier is necessary. We can connect them and obtain a lower bound on the gap $\Delta \geq 2\epsilon$ if the classifier used in the majority voting procedure has failure probability $\bar\delta = \frac{1}{2}+\epsilon$.

A Bayesian version of the argument of \cref{lem:majority_vote_boosting} proceeds similarly, but the resulting right hand side is
\begin{align}
    \delta \leq |\calY|\int\diff p(\xx) \, \exp\left(-\frac{1}{2}m \Delta(\xx)^2 \right),
\end{align}
which is not as easy to interpret and to rearrange to obtain a lower bound on $m$ for desired $\delta(\xx)$, which is why we think it better to use the agnostic version of \cref{lem:majority_vote_boosting} together with the restriction argument of \cref{lem:agnostic_to_bayesian_lifting} to arrive at a Bayesian version of the result. There, the restriction of the parameter space $\bar\calX$ should be chosen such that a lower bound on the gap $\Delta$ can be guaranteed for all $\xx \in \bar\calX$.

\section{Ultimate limits of single-shot quantum machine learning} \label{sct:bounds}

Having defined the single-shot property, it is natural to ask if it can actually be achieved for a given embedding $\rho(\xx)$ and, vice-versa, what resources are necessary to realize an embedding that allows for single-shot classification. This section is devoted to exploring the ultimate limits of single-shot quantum classifiers by reducing the single-shot classification task to a multi-hypothesis testing task. After establishing said reduction, we open the black box of the embedding and establish lower bounds on the gate complexity a quantum circuit needs to establish the single-shot property.

\subsection*{Reduction to hypothesis testing}

In \cref{section_reduction}, we have seen that the error probability of quantum classifiers can be lower-bounded through a reduction to multi-hypothesis testing. 
We can employ a similar argument to bound the best value of the single-shot error probability $\delta$. The subtle difference lies in the fact that we are no longer concerned with classifying the data well with respect to the true labels $y(\xx)$ as they don't figure in the definition of the single-shot property. Instead, we want the classifier to work with respect to the labels \emph{assigned by the classifier itself}, i.e.\ $f(\xx)$ as in \cref{eqn:def_fx}. After this change of perspective the argument can, however, proceed analogously.

To quantify the limits of the Bayesian single-shot property, we split the parameter space according to the majority labels assigned by the classifier and recall the definition \smash{$\tilde{\calX}_{y'} \coloneqq \{ \xx \in \calX \pipe f(\xx) = y' \}$} from \cref{eqn:def_parameter_space_for_classifier_labels}. We obtain the following theorem.
\begin{theorem}[Bayesian error probability lower bound]\label{thm:bayesian_error_probability_lower_bound}
Let $f$ be a probabilistic classifier on a data domain $\calX$ taking discrete values in $\calY$. For datapoints $\xx$ distributed according to a distribution $p(\xx)$, we define the average states for the classes assigned by $f$ as
\begin{align}
    \tilde{p}(y) &\coloneqq \int_{\tilde{\calX}_y} \diff p(\xx), \ \,
    \tilde{\rho}_y \coloneqq \frac{1}{\tilde{p}(y)}\int_{\tilde{\calX}_y} \diff p(\xx)\, \rho(\xx).
\end{align}
If the classifier $f$ has the Bayesian single-shot property, the error probability $\delta$ has to fulfill
\begin{align}
    \delta \geq P_e^{*}( \{ \tilde{p}_{y} \tilde{\rho}_y \}_{y \in \calY}).
\end{align}
\end{theorem}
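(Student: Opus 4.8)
The plan is to mirror the reduction carried out in \cref{section_reduction} for the true labels, but now phrased in terms of the labels \emph{assigned by the classifier itself}. The starting observation is that, since $f(\xx) = \operatornamewithlimits{argmax}_{y} \Tr[\Pi_y \rho(\xx)]$, the single-shot success probability at a given point coincides with the largest label probability,
\begin{align}
    \bbP[\hat{f}(\xx) = f(\xx)] = \max_{y \in \calY} \Tr[\Pi_y \rho(\xx)] = \Tr[\Pi_{f(\xx)} \rho(\xx)],
\end{align}
so that the Bayesian single-shot hypothesis is precisely the statement that the $p(\xx)$-average of this quantity is at least $1-\delta$, as in \cref{eqn:bayesian_single_shot_condition_rewrite}.

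First I would integrate this identity against $p(\xx)$ and split the data domain according to the classifier-assigned label, using the sets $\tilde{\calX}_y$ from \cref{eqn:def_parameter_space_for_classifier_labels}. On $\tilde{\calX}_y$ we have $f(\xx) = y$, so the integrand becomes $\Tr[\Pi_y \rho(\xx)]$; pulling the label-indexed sum outside and taking the trace out of the integral gives
\begin{align}
    \int \diff p(\xx) \, \max_{y \in \calY} \Tr[\Pi_y \rho(\xx)] = \sum_{y \in \calY} \Tr\!\left[\Pi_y \, \tilde{p}(y) \tilde{\rho}_y\right],
\end{align}
which is exactly the success probability appearing in the Bayesian multi-hypothesis testing objective for the average states $\{\tilde{p}(y)\tilde{\rho}_y\}$. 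Second, I would note that the POVM $\{\Pi_y\}_{y \in \calY}$ is itself an \emph{admissible} strategy for discriminating these states, since it provides exactly one operator per label. Being in general suboptimal, its success probability is at most the optimal one, i.e.\ by \cref{eqn:def_multi_hyp_error_bayesian},
\begin{align}
    \sum_{y \in \calY} \Tr\!\left[\Pi_y \, \tilde{p}(y)\tilde{\rho}_y\right] \leq 1 - P_e^{*}(\{\tilde{p}(y)\tilde{\rho}_y\}_{y \in \calY}).
\end{align}
Chaining this with the single-shot hypothesis $1 - \delta \leq \int \diff p(\xx)\, \max_{y} \Tr[\Pi_y\rho(\xx)]$ yields $1 - \delta \leq 1 - P_e^{*}(\{\tilde{p}(y)\tilde{\rho}_y\}_{y \in \calY})$, which rearranges to the claimed bound.

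The argument is essentially bookkeeping once the change of perspective has been made, so I do not expect a serious analytic obstacle; the one point that requires genuine care is conceptual rather than technical. One must consistently track that the relevant hypotheses are indexed by the \emph{classifier's own} labels $f(\xx)$ and not by the ground-truth labels $y(\xx)$, so that the measurement $\{\Pi_y\}$ and the family of average states $\{\tilde{\rho}_y\}$ share the same index set $\calY$ and $\{\Pi_y\}$ genuinely qualifies as a candidate measurement in the induced multi-hypothesis test. Labels that the classifier never assigns have $\tilde{p}(y) = 0$ and simply drop out of the hypothesis-testing problem without affecting the bound, so no separate case distinction is needed.
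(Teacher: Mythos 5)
Your proof is correct and follows essentially the same route as the paper's own proof: starting from the rewritten single-shot condition, splitting the integral over the sets $\tilde{\calX}_y$, pulling the trace outside to obtain $\sum_{y} \Tr[\Pi_y \tilde{p}_y \tilde{\rho}_y]$, and bounding this quantity by the optimal success probability $P_s^{*} = 1 - P_e^{*}$. The only difference is presentational: you state explicitly that $\{\Pi_y\}_{y\in\calY}$ is an admissible discrimination strategy (and handle the $\tilde{p}(y)=0$ edge case), steps the paper leaves implicit in its final inequality.
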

\begin{proof}
We start from \cref{def:bayesian_single_shot_classifier} and obtain
\begin{align}
    1-\delta &\leq \operatornamewithlimits{\mathbb{E}}_{\xx \sim p(\xx)} \left\{ \bbP[ \hat{f}(\xx) = f(\xx)] \right\}\\
        &= \int_\calX \diff p(\xx) \, \bbP[ \hat{f}(\xx) = f(\xx)] \\ 
        &= \int_{\calX} \diff p(\xx) \,\max_{y' \in \calY}\Tr[ \Pi_{y'} \rho(\xx) ] \\
        &= \sum_{y' \in \calY} \int_{\tilde{\calX}_{y'}} \diff p(\xx) \,  \Tr[ \Pi_{y'} \rho(\xx) ] \\
        &= \sum_{y' \in \calY} \Tr\left[ \Pi_{y'} \int_{\tilde{\calX}_{y'}} \diff p(\xx) \, \rho(\xx) \right] \\
        &= \sum_{y' \in \calY} \Tr\left[ \Pi_{y'} \tilde{p}_{y'} \tilde{\rho}_{y'}\right] \\
        &\leq P_s^{*}(\{ \tilde{p}_{y} \tilde{\rho}_y \}_{y \in \calY}).
\end{align}
With $P_s^{*}(\{ \tilde{p}_{y} \tilde{\rho}_y \}_{y \in \calY})$ being the optimal Bayesian
multi-hypothesis testing success probability. Rearranging and using $P_s^{*} = 1 - P_e^{*}$ yields the statement of the theorem.
\end{proof}
In the above theorem, the data space $\calX$ and the marginal distribution of the data $p(\xx)$ are the same as in \cref{section_reduction}, only the average state $\tilde{\rho}_{y'}$ is now taken among the states which are assigned the same label by the classifier and is not related to the actual correct labels of the dataset. The lower bound on $\delta$ of the above Theorem is achieved by the optimal measurement for the Bayesian multi-hypothesis testing problem between the states $\tilde{\rho}_y$ with prior probabilities $\tilde{p}_y$.

In the agnostic setting, the single-shot property has to hold irrespective of the underlying distribution of the data. As such, we can bound it by constructing the \enquote{worst} possible distribution over the data, i.e.\ the one that minimizes the Bayesian success probability. It is clear that this is distribution consists of putting all the probability on the single datapoint $\xx$ for which the classifier struggles the most.\footnote{It is important to emphasize that in this scenario, the adversarial distribution over the data is chosen \emph{after} a POVM performing the classification was fixed, if it were the other way round a different strategy of the adversary would be optimal.}
This strategy yields the following bound.
\begin{theorem}[Agnostic error probability lower bound]\label{thm:agnostic_error_probability_lower_bound}
Let $f$ be a probabilistic classifier on a data domain $\calX$ taking discrete values in $\calY$.
If the classifier $f$ has the agnostic single-shot property, the error probability $\bar\delta$ has to fulfill
\begin{align}
    \bar\delta \geq \max_{ \{\xx_y \in \tilde{\calX}_y \}_{y \in \calY} } \overline{P}_e^{*}( \{ \rho(\xx_y) \}_{y \in \calY}).
\end{align}
\end{theorem}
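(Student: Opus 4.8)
The plan is to mirror the reduction behind \cref{thm:bayesian_error_probability_lower_bound}, but to exploit the minimax structure of the agnostic definition directly. The central observation is that the classifier's own POVM $\{\Pi_y\}_{y \in \calY}$ is a concrete (though not necessarily optimal) measurement strategy for any minimax hypothesis testing problem we construct from the embedded states. So rather than averaging, I would show that this fixed measurement already certifies a small worst-case error for every collection of class representatives, and then read off the bound on $\overline{P}_e^{*}$.

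First I would fix an arbitrary collection of representatives $\{\xx_y \in \tilde{\calX}_y\}_{y \in \calY}$, one datapoint per label assigned by the classifier. By the definition of $\tilde{\calX}_y$ in \cref{eqn:def_parameter_space_for_classifier_labels}, we have $f(\xx_y) = y$, which is exactly the statement that the element $\Pi_y$ attains the maximum $\max_{y'} \Tr[\Pi_{y'} \rho(\xx_y)]$. Invoking the agnostic single-shot property of \cref{def:agnostic_single_shot_classifier} in its rewritten form, the minimum over all $\xx \in \calX$ of $\max_{y'} \Tr[\Pi_{y'} \rho(\xx)]$ is at least $1 - \bar\delta$, so in particular this bound holds at each representative $\xx_y$, giving
\begin{align}
    \Tr[\Pi_y \rho(\xx_y)] = \max_{y'} \Tr[\Pi_{y'} \rho(\xx_y)] \geq 1 - \bar\delta
\end{align}
for every $y \in \calY$.

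Next I would interpret $\{\rho(\xx_y)\}_{y \in \calY}$ as the alternatives of a minimax multi-hypothesis test, pairing hypothesis $y$ with state $\rho(\xx_y)$, and use the classifier's POVM $\{\Pi_y\}$ as the distinguishing measurement. The worst-case error this particular measurement incurs is $\max_y (1 - \Tr[\Pi_y \rho(\xx_y)]) \leq \bar\delta$ by the displayed bound. Since $\overline{P}_e^{*}$ is defined by \emph{minimizing} the worst-case error over all POVMs, it can only be smaller, whence $\overline{P}_e^{*}(\{\rho(\xx_y)\}_{y \in \calY}) \leq \bar\delta$. As this holds for every admissible choice of representatives, taking the maximum over $\{\xx_y \in \tilde{\calX}_y\}$ preserves the inequality and yields the claimed lower bound on $\bar\delta$.

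The argument is short and I do not anticipate a genuine analytic obstacle; the only point demanding care is the order of quantifiers flagged in the footnote. The hypothesis test must be set up with the classifier's measurement held \emph{fixed}, so that the adversarial choice of representatives comes afterward and the labelling $\Pi_y \leftrightarrow \rho(\xx_y)$ is forced by $f(\xx_y) = y$. This is what lets the single-shot guarantee transfer cleanly onto each diagonal term $\Tr[\Pi_y \rho(\xx_y)]$ and onto the corresponding diagonal POVM strategy. Reversing the order—choosing the states before the measurement—would break this forced pairing and change the optimal adversarial strategy, so I would make the fixed-POVM interpretation explicit to keep the reduction valid.
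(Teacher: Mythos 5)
Your proposal is correct and follows essentially the same route as the paper's proof: both rest on the observation that $f(\xx_y)=y$ forces $\Pi_y$ to attain the maximum $\max_{y'}\Tr[\Pi_{y'}\rho(\xx_y)]\geq 1-\bar\delta$, and that the classifier's own POVM is then a feasible (generally suboptimal) strategy for the minimax test on $\{\rho(\xx_y)\}_{y\in\calY}$, so that $\overline{P}_e^{*}\leq\bar\delta$ for every choice of representatives. The only difference is presentational — you fix an arbitrary collection and argue pointwise in terms of error probabilities, while the paper writes the same content as a chain of min/max exchanges in terms of success probabilities.
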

\begin{proof}
    With the reasoning described above, we see that for the adversarially chosen distribution we have
    \begin{align}
    1-\bar\delta &\leq \min_{\xx \in \calX} \bbP[ \hat{f}(\xx) = f(\xx)] \\
    &=\min_{\xx \in \calX} \max_{y' \in \calY} \Tr[ \Pi_{y'} \rho(\xx) ].
    \end{align}
    We can write
    \begin{align}
        \min_{\xx \in \calX}\max_{y' \in \calY} = \min_{y \in \calY} \min_{\xx_y \in \tilde{\calX}_y},
    \end{align}
    because we know by definition of \smash{$\tilde{\calX}_y$} which projector achieves the maximum over $y'$. This gives
    \begin{align}
        1 - \bar\delta &\leq \min_{y \in \calY} \min_{\xx_y \in \tilde{\calX}_y}\Tr[ \Pi_{y} \rho(\xx_y) ].
    \end{align}
    To exchange the two minima, we optimize over sets of data points \smash{$\{ \xx_y \in \tilde{\calX}_y \}_{y \in \calY}$}, one for each possible label. This gives
    \begin{align}
        1 - \bar\delta &\leq \min_{\{ \xx_y \in \tilde{\calX}_y \}_{y \in \calY}} \min_{y \in \calY}\Tr[ \Pi_{y} \rho(\xx_y) ] \\
        &\leq \min_{\{ \xx_y \in \tilde{\calX}_y \}_{y \in \calY}} \overline{P}_s^{*}( \{ \rho(\xx_y) \}_{y \in \calY}),
    \end{align}
    where we recognized the definition of the minimax success probability for multi-hypothesis testing from \cref{eq:agnostic_error} in the last step. The theorem statement follows by using $\overline{P}_s^{*} = 1 - \overline{P}_e^{*}$ and rearranging.
\end{proof}
We note that we could in theory further improve the lower bound on the agnostic single-shot error probability in the above bound by including multiple states per label which would result in a reduction to a composite multi-hypothesis testing problem. These problems are less studied and fewer analytical tools are available for them. For numerical investigations this is, however, an adequate way to achieve tighter bounds in conjunction with semi-definite programming.

We now use the tools presented in \cref{sct:multi-hyp} to connect the ultimate limits of single-shot quantum classification to more intuitive quantities.
We apply the reduction to the binary case of \cref{lemma_reduction} combined with the Holevo-Helstrom theorem of \cref{helstrom} to the Bayesian error probability lower bound of \cref{thm:bayesian_error_probability_lower_bound}. This gives the following lower bound on the error probability in terms of the trace distance of the average states for the different classes:
\begin{align}\label{eqn:bayesian_states_bound}
    \delta &\geq \sum_{y \in \calY} \tilde{p}_y \max_{y \neq y' \in \calY} \frac{1}{2} e(y,y')\\
    &\geq \min_{y \in \calY} \max_{y \neq y' \in \calY} \frac{1}{2} e(y, y'),
\end{align}    
where we defined
\begin{align}
    e(y, y') \coloneqq 1 - \bigg\| \frac{\tilde{p}_y}{\tilde{p}_y + \tilde{p}_{y'}} \tilde{\rho}_y - \frac{\tilde{p}_{y'}}{\tilde{p}_{y} + \tilde{p}_{y'}} \tilde{\rho}_{y'} \bigg\|_1.
\end{align}
This expression shows that the closer the average states of the different classes assigned by the classifier are, the worse the achievable error probability becomes and the less \enquote{single-shot} our classifier must be. Therefore, in order to have a small error probability, the average states of different assigned labels have to be far apart.

In the agnostic case, we can directly reduce to a binary discrimination problem as trying to distinguish less states in a minimax fashion will certainly be easier, i.e.\ we have that
\begin{equation}
    \overline{P}_e^{*}(\{ \rho_j \}_{j=1}^r)) \geq \max_{1 \leq i \neq j \leq r } \overline{P}_e^{*}(\rho_i, \rho_j).
\end{equation}
While there is no closed-form expression for the minimax binary hypothesis testing error probability, it is reasonable to assume that it is not too far from the problem of distinguishing both states with prior probability $1/2$ for each state. Therefore:
\begin{equation}
    \overline{P}_e^{*}(\rho_i, \rho_j) \geq P_e^{*}\left(\frac12\rho_i, \frac12 \rho_j\right).
\end{equation}

Combining this reasoning again with the Holevo-Helstrom bound and \cref{thm:agnostic_error_probability_lower_bound}, we obtain the following lower bound for the agnostic single-shot error probability
\begin{align}
    \bar\delta &\geq \max_{\substack{\xx, \xx'\\ f(\xx)\neq f(\xx')}} \overline{P}_e^{*}(\rho(\xx), \rho(\xx')) \\
    &=\max_{\substack{\xx, \xx'\\ f(\xx)\neq f(\xx')}} P_e^{*}\left(\frac12\rho(\xx), \frac12 \rho(\xx')\right)\\
    &\geq\max_{\substack{\xx, \xx'\\ f(\xx)\neq f(\xx')}} \frac{1}{2} \left(1 - \frac{1}{2}\lVert \rho(\xx) - \rho(\xx') \rVert_1 \right)\\
    &=  \frac{1}{2} - \frac{1}{4}\min_{\substack{\xx, \xx'\\ f(\xx)\neq f(\xx')}}\lVert \rho(\xx) - \rho(\xx') \rVert_1.
\end{align}
From this, we can conclude that agnostic single-shotness imposes a strong requirement on the distances of the embedded states for different classes predicted by the classifier, as we can reverse the inequality above to obtain
\begin{align}
    \frac{1}{2}\lVert \rho(\xx) - \rho(\xx') \rVert_1 \geq 1 - 2\bar\delta
\end{align}
for all $\xx$ and $\xx'$ belonging to different classes as predicted by the classifier. 

It is quite obvious from this bound that any embedding $\rho(\xx)$ that continuously varies over $\xx$ cannot be agnostically single shot on the whole value range of $\xx$. This emphasizes again that the agnostic definition of single-shotness necessitates a restriction of the domain of $\xx$, $\calX$, as we also used in the lifting from agnostic to Bayesian in \cref{lem:agnostic_to_bayesian_lifting}.

Having established a connection of the single-shot error probability in both the agnostic and the Bayesian setting to the trace distance of the embedded states, we open up the black-box of the embedding $\rho(\xx)$ and consider that it is realized by a variational quantum circuit. In that case, imposing a certain single-shot error probability necessitates a certain depth and/or width of the circuit as we show below. 

\subsection*{Noiseless quantum circuits}

We will first consider the case where the data is embedded by the means of a noiseless quantum circuit. We can express any such circuit as a unitary composed from parametrized gates $R_G(\alpha) = e^{-i G \alpha}$, where $G$ is the \emph{generator} of the gate, and fixed unitaries. In our case, we assume that the circuit is partitioned into $L$ layers, where each layer corresponds to a subcircuit that encodes every parameter $x_i$ once through a parametrized gate and can consist of arbitrary variationally parametrized gates and fixed unitaries otherwise. In \cref{appendix_continuity}, we prove that parametrized quantum circuits have a continuity bound in terms of the arguments of the parametrized gates, which, as a corollary, yields the following result:
\begin{lemma}[Continuity of parametrized quantum states]\label{lem:continuity_of_parametrized_states}
    Let $\rho_{\ttheta}(\xx)$ be obtained from a parametrized quantum circuit of $L$ layers, where in each layer the parameters \smash{$\{ x_i \}_{i=1}^d$} are encoded through separate parametrized gates. Then, we have that
    \begin{align}\label{eqn:continuity_bound}
        \lVert \rho_{\ttheta}(\xx) - \rho_{\ttheta}(\xx') \rVert_1 \leq L \Delta \lVert \xx - \xx' \rVert_1,
    \end{align}
    where $\Delta \coloneqq \max_{G \in \calG}( \lambda_{\max}(G) - \lambda_{\min}(G))$ is the largest spectral spread of the generators used to encode the parameters \smash{$\{ x_i \}_{i=1}^d$}.
\end{lemma}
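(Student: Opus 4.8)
The plan is to pass from the trace distance of the two output states to the operator-norm distance of the two circuit unitaries, reduce that to a sum of single-gate contributions by a telescoping argument, and finally bound each single-gate contribution using the spectrum of its generator. I would write the circuit as a product $U_\ttheta(\xx) = W_N(\xx)\cdots W_1(\xx)$ of elementary gates, where each $W_k$ is either a fixed unitary or a parametrized gate $R_G(x_i)=e^{-iGx_i}$ encoding one of the data coordinates. Since $\rho_\ttheta(\xx)=U_\ttheta(\xx)\rho_0 U_\ttheta(\xx)\dagg$, I would first invoke the standard bound $\lVert U\rho_0 U\dagg - V\rho_0 V\dagg\rVert_1 \le 2\lVert U-V\rVert_\infty$ (proved by adding and subtracting $U\rho_0 V\dagg$ and using the H\"older inequality $\lVert ABC\rVert_1\le\lVert A\rVert_\infty\lVert B\rVert_1\lVert C\rVert_\infty$ together with $\lVert U\rVert_\infty=1$ and $\lVert\rho_0\rVert_1=1$), where $\lVert\cdot\rVert_\infty$ denotes the operator norm.

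Next I would control $\lVert U_\ttheta(\xx)-U_\ttheta(\xx')\rVert_\infty$ by a hybrid/telescoping identity: for two products of unitaries one has $\lVert A_N\cdots A_1 - B_N\cdots B_1\rVert_\infty \le \sum_k \lVert A_k - B_k\rVert_\infty$, which follows from inserting mixed products and using that unitary factors leave the operator norm invariant. The fixed gates are identical for $\xx$ and $\xx'$, so they drop out and only the parametrized gates contribute. For a single parametrized gate, $e^{-iGx_i}-e^{-iGx_i'} = e^{-iGx_i'}\bigl(e^{-iG(x_i-x_i')}-\Id\bigr)$, so its operator norm equals $\lVert e^{-iG\delta}-\Id\rVert_\infty$ with $\delta=x_i-x_i'$; diagonalizing the Hermitian generator $G$ then gives $\lVert e^{-iG\delta}-\Id\rVert_\infty = \max_j|e^{-i\lambda_j\delta}-1| = \max_j 2|\sin(\lambda_j\delta/2)| \le |\delta|\max_j|\lambda_j|$.

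The subtle point — and the step I expect to be the main obstacle — is obtaining the \emph{spectral spread} $\lambda_{\max}-\lambda_{\min}$ rather than $\max_j|\lambda_j|$, which the naive estimate above produces and which is larger. The resolution is a gauge argument: replacing $G$ by the centered generator $\tilde G \coloneqq G - \tfrac{1}{2}(\lambda_{\max}+\lambda_{\min})\Id$ changes $R_G$ only by the global phase $e^{i(\lambda_{\max}+\lambda_{\min})x_i/2}$, which cancels in the conjugation $U\rho_0 U\dagg$ and therefore leaves $\rho_\ttheta(\xx)$ unchanged. Hence I may carry out the entire estimate with $\tilde G$ in place of $G$, for which $\max_j|\tilde\lambda_j| = \tfrac{1}{2}(\lambda_{\max}-\lambda_{\min})$, giving $\lVert e^{-i\tilde G\delta}-\Id\rVert_\infty \le \tfrac{1}{2}(\lambda_{\max}-\lambda_{\min})|\delta|$; the factor $\tfrac12$ is exactly what absorbs the factor $2$ from the trace-distance reduction. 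Care is needed to make this substitution simultaneously and consistently for every gate before telescoping.

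Finally I would collect terms. Each coordinate $x_i$ is encoded exactly once per layer, so summing the single-gate bounds over all $L$ layers and all $d$ coordinates yields $\lVert U_\ttheta(\xx)-U_\ttheta(\xx')\rVert_\infty \le \sum_{\ell=1}^{L}\sum_{i=1}^{d} \tfrac{1}{2}(\lambda_{\max}^{(i)}-\lambda_{\min}^{(i)})\,|x_i-x_i'| \le \tfrac{L\Delta}{2}\lVert\xx-\xx'\rVert_1$, where $\Delta=\max_{G\in\calG}(\lambda_{\max}(G)-\lambda_{\min}(G))$ bounds every per-gate spread and $\lVert\xx-\xx'\rVert_1=\sum_i|x_i-x_i'|$. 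Combining with the factor-$2$ trace-distance bound gives $\lVert\rho_\ttheta(\xx)-\rho_\ttheta(\xx')\rVert_1\le L\Delta\lVert\xx-\xx'\rVert_1$, as claimed. The general continuity bound in terms of the arguments of the parametrized gates referenced before the lemma is precisely the intermediate inequality $\lVert\rho_\ttheta(\xx)-\rho_\ttheta(\xx')\rVert_1\le\sum_{\text{gates }g}\Delta_g|\alpha_g-\alpha_g'|$, and the lemma follows as its corollary once the layer structure fixes how often each coordinate appears.
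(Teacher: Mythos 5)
Your proposal is correct and follows essentially the same route as the paper: telescoping the unitary difference gate by gate, bounding each parametrized gate via $\lVert e^{-iG\delta}-\bbI\rVert_\infty$ on the spectrum, and recovering the spectral spread $\Delta(G)/2$ by the global-phase (generator-centering) argument, with the resulting factor $1/2$ absorbing the factor $2$ from passing to states. The only cosmetic difference is that you inline the bound $\lVert U\rho_0 U\dagg - V\rho_0 V\dagg\rVert_1 \le 2\lVert U-V\rVert_\infty$ directly, whereas the paper packages the same estimate at the channel level as a diamond-norm continuity theorem (also covering tensor powers and tightness) and then contracts it against $\rho_0$.
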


\cref{lem:continuity_of_parametrized_states} shows us that a parametrized circuit embeds datapoints that are close into quantum states that are close as well. Intuitively, we can reason that the quantum classifier is going to have a hard time with two datapoints that are close but belong to different assigned labels. We formalize this in the following theorem.
\begin{theorem}[Lower bound on the single-shot error probability (Bayesian)]\label{theorem_bayesian_bound}
Let \smash{$\hat{f}(\xx)$} be a quantum classifier with the task of classifying $r$ groups of classical data for which we know its probability distribution. The single-shotness error probability is bounded by the worst average distance between two data classes:
\begin{align} \label{eqn:theorem_bayesian_bound}
    \delta \geq  \min_{1 \leq i \leq r} \max_{1 \leq j \neq i \leq r} \frac{\min(p_i,p_j)}{p_i + p_j}\left(1 - \frac{L\Delta d_\text{avg}^{ij}}{2}\right),
\end{align}
where
\begin{align}
d_\text{avg}^{ij} \coloneqq (p_i + p_j) \int_{\mathcal{X}_i}\frac{dp(\xx_i)}{p_i} \int_{\mathcal{X}_j}\frac{dp(\xx_j)}{p_j} \|\xx_i-\xx_j\|_1
\end{align}
is the expected distance of two datapoints sampled according to the data distribution, conditioned on them being from the classes $i$ and $j$.
\end{theorem}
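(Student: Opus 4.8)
The plan is to start from the trace-distance form of the Bayesian lower bound that has already been assembled in \cref{eqn:bayesian_states_bound} and then replace the trace distances of the averaged class states by the continuity estimate of \cref{lem:continuity_of_parametrized_states}. Concretely, combining \cref{thm:bayesian_error_probability_lower_bound} with the binary reduction of \cref{lemma_reduction} and the Holevo--Helstrom identity \cref{helstrom} gives
\begin{align}
    \delta \geq \min_{1 \leq i \leq r} \max_{1 \leq j \neq i \leq r} \frac{1}{2} e(i,j),
\end{align}
with $e(i,j) = 1 - \lVert \frac{p_i}{p_i+p_j}\tilde\rho_i - \frac{p_j}{p_i+p_j}\tilde\rho_j \rVert_1$, where $\tilde\rho_i$ is the state averaged over the inputs the classifier assigns to label $i$. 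The whole task then reduces to rewriting each $e(i,j)$ in the advertised factorized form.

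The key algebraic step I would take is to peel the prior asymmetry off the trace distance. Assuming without loss of generality $p_i \geq p_j$, I would use the identity $\frac{p_i}{p_i+p_j}\tilde\rho_i - \frac{p_j}{p_i+p_j}\tilde\rho_j = \frac{p_j}{p_i+p_j}(\tilde\rho_i - \tilde\rho_j) + \frac{p_i-p_j}{p_i+p_j}\tilde\rho_i$ together with the triangle inequality and $\lVert \tilde\rho_i \rVert_1 = 1$. Deliberately pulling out the \emph{larger}-weight state leaves the smaller coefficient $\min(p_i,p_j)/(p_i+p_j)$ multiplying the trace distance, and collecting the constant term via $1 - |p_i-p_j|/(p_i+p_j) = 2\min(p_i,p_j)/(p_i+p_j)$ produces
\begin{align}
    \frac{1}{2} e(i,j) \geq \frac{\min(p_i,p_j)}{p_i+p_j}\left(1 - \frac{1}{2}\lVert \tilde\rho_i - \tilde\rho_j \rVert_1\right).
\end{align}
This already matches the structure of the claim, so it remains only to bound the single trace distance $\lVert \tilde\rho_i - \tilde\rho_j \rVert_1$.

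For that last bound I would invoke the ``average of differences'' trick: since each conditional measure integrates to one, one has
\begin{align}
    \tilde\rho_i - \tilde\rho_j = \int_{\calX_i}\frac{\diff p(\xx)}{p_i}\int_{\calX_j}\frac{\diff p(\xx')}{p_j}\big(\rho(\xx)-\rho(\xx')\big),
\end{align}
and convexity of the trace norm moves $\lVert \cdot \rVert_1$ inside both integrals. Applying the circuit continuity bound $\lVert \rho(\xx)-\rho(\xx')\rVert_1 \leq L\Delta\lVert\xx-\xx'\rVert_1$ of \cref{lem:continuity_of_parametrized_states} then yields $\lVert \tilde\rho_i - \tilde\rho_j \rVert_1 \leq L\Delta\, d_\text{avg}^{ij}$, recognizing the remaining double integral of $\lVert\xx-\xx'\rVert_1$ as the conditional average distance $d_\text{avg}^{ij}$. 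Substituting back and taking $\min_i\max_j$ gives the theorem.

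The only genuinely delicate point is the decomposition in the second step: a naive triangle inequality that splits $\tilde\rho_i$ and $\tilde\rho_j$ symmetrically would spoil the clean $\min(p_i,p_j)/(p_i+p_j)$ prefactor, so one must split so that the \emph{smaller} prior weight multiplies the distance term. Everything else -- the averaging identity and the invocation of the continuity lemma -- is routine, the latter being precisely the interface for which \cref{lem:continuity_of_parametrized_states} was designed. A minor bookkeeping subtlety worth flagging is that the grouping here is by the labels \emph{assigned} by the classifier, exactly as in \cref{thm:bayesian_error_probability_lower_bound}, rather than by the true labels of the data.
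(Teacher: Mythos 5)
Your proposal is correct and follows essentially the same route as the paper's own proof: the same reduction chain (\cref{thm:bayesian_error_probability_lower_bound}, \cref{lemma_reduction}, and the Holevo--Helstrom identity of \cref{helstrom}), the same asymmetric prior-splitting via the triangle inequality so that $\min(p_i,p_j)/(p_i+p_j)$ multiplies the distance term, and the same pair-averaging double integral followed by \cref{lem:continuity_of_parametrized_states} -- the only difference being that you peel off the priors before averaging over pairs of datapoints while the paper averages first and splits pointwise inside the integral, an immaterial reordering. One remark: your final identification of $\int_{\calX_i}\frac{\diff p(\xx)}{p_i}\int_{\calX_j}\frac{\diff p(\xx')}{p_j}\lVert \xx - \xx'\rVert_1$ with $d_\text{avg}^{ij}$ silently drops the $(p_i+p_j)$ prefactor appearing in the theorem's formal definition of $d_\text{avg}^{ij}$, but the paper's own proof makes exactly the same identification, so this discrepancy is inherited from the paper rather than a flaw in your argument.
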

The proof is presented in \cref{proof_continuity_noiseless}. The average distance $d_{\mathrm{avg}}^{ij}$ gives us a sense of how far apart the points of the two classes $i$ and $j$ are in the input data space. 

In this theorem we see that in order to have a low single-shot error probability, either we have that the different classes are located far apart in the input space or we have a sufficiently deep circuit. Therefore, fixing a dataset, this expression shows us that the depth of the circuit can be limiting the classifier performance. To get some intuition, we assume that all $r$ classes are equally likely, i.e. that $p = 1/r$, in which case we can rearrange the expression in \cref{eqn:theorem_bayesian_bound} to obtain: 
\begin{align}
    L \geq 2 \frac{1-2\delta}{\Delta d_{\text{avg}}},
\end{align}
where
\begin{align}
d_{\text{avg}} = \max\limits_{1 \leq i \leq r} \min\limits\limits_{1 \leq j \neq i \leq r}d_\text{avg}^{ij}
\end{align}
and $\delta \geq P_e^{*}$. As we can see, the length of the circuit is inversely proportional to the average distance between the classes. The closer they are, the longer the circuit will be needed in order to separate the states enough so they can be classified. 

It is worth emphasizing at this point that even if the condition of the above Theorem would permit a certain single-shot error probability, it is not necessarily the case (and usually not expected) that this single-shot error probability is actually achieved by the circuit.

\subsection*{Noisy circuits}

Realistic circuits are far from perfect. In the NISQ era, they are plagued by the eponymous noise that limits their performance. To study the impact of noise on the single-shotness of a quantum classifier, we study an error model where local depolarizing noise of survival probability $p$, \smash{$\calD_p[\rho] = p \rho + (1-p) \omega$}, $\omega$ being the maximally mixed state, is applied after every step of computation of the classifier to every qubit. 
To keep track of this mathematically, we will use the notation $\rho_t$ to denote the quantum state produced by the noisy embedding circuit after $t$ steps of computation. In our model of the embedding circuit, we have so far been counting the number of layers $L$, where each layer consists of a data embedding that we now assume to take one step of computation and fixed unitaries and variational gates that we assume to take $\ell$ steps of computation, yielding a total of $T = L(1 + \ell)$ steps of computation in our parametrized circuit.

A quantum circuit affected by local depolarizing noise will eventually produce a quantum state that is very close to the maximally mixed state $\omega$~\cite{stilck_franca2021limitations}. This result is established through a decay of the relative entropy between the circuit state after $t$ steps of computation and yields the following estimate~\cite{stilck_franca2021limitations}
\begin{align}
    D(\rho_t \fatpipe \omega) \leq p^{2t} D(\rho_0 \fatpipe \omega) \leq p^{2t} n,
\end{align}
where $n$ is the number of qubits. Combining this with Pinsker's inequality then gives us a bound on the trace distance to the maximally mixed state
\begin{align}
    \lVert \rho_t - \omega \rVert_1 \leq p^t \sqrt{2n},
\end{align}
which decreases exponentially in $t$. By a simple application of the triangle inequality, we obtain an estimate of the distance between two noisy embedded states
\begin{align}
    \lVert \rho_{\ttheta}(\xx) - \rho_{\ttheta}(\xx') \rVert_1 
    &\leq \lVert \rho_{\ttheta}(\xx) - \omega \rVert_1 + \lVert \rho_{\ttheta}(\xx') - \omega \rVert_1\\
    &\leq p^{L(1+\ell)} \sqrt{8n}\label{eqn:contraction_bound}.
\end{align}

This estimate, however, does not take into account the continuity of the circuits, so for close parameters it can actually be overly optimistic. As noisy channels can't increase the trace distance between states, we immediately see that the result of \cref{lem:continuity_of_parametrized_states} remains valid if noisy circuits are considered, albeit it will also be overly optimistic because it does not consider the noise in the system. A simple way of combining both estimates is to simply take the minimum of both, which gives
\begin{align}\label{eqn:min_of_two_bounds}
    \lVert \rho_{\ttheta}(\xx) - \rho_{\ttheta}(\xx') \rVert_1 \leq \min \{ p^{L(1+\ell)} \sqrt{8n}, L\Delta \lVert \xx - \xx' \rVert_1 \}.
\end{align}
We can improve on this estimate for shallow circuits. We perform a more involved analysis in \cref{appendix_noisy_continuity}, which implies the following Lemma.
\begin{lemma}\label{lem:noisy_continuity_estimate}
Consider a noisy embedding circuit with $L$ layers such that
\begin{align}
L \geq L_0 = \left\lceil 1 + \frac{1}{2(\ell+1)} \frac{\log 2n}{\log \frac{1}{p}} \right\rceil.
\end{align}
Then,
\begin{align}
\begin{split}
    &\lVert \rho_{\ttheta}(\xx) - \rho_{\ttheta}(\xx') \rVert_1  \\
    &\qquad\leq \Delta \lVert \xx - \xx' \rVert_1
    \left(L_0 + \sqrt{2n}\frac{p^{L_0(\ell+1)} - p^{L(\ell + 1)}}{1-p^{\ell + 1}}\right).\label{eqn:geometric_bound}
\end{split}
\end{align}
In the case $L \leq L_0$, the bound does not improve on the noiseless one.
\end{lemma}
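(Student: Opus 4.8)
The plan is to track how the trace distance is injected and then relaxes \emph{layer by layer}, via a telescoping (hybrid) argument that interpolates between the one-layer continuity estimate of \cref{lem:continuity_of_parametrized_states} and the exponential relaxation $\lVert \rho_t - \omega \rVert_1 \le p^t \sqrt{2n}$ towards the maximally mixed state. Write $\epsilon \coloneqq \Delta \lVert \xx - \xx' \rVert_1$ for the distance budget a single layer can inject. For $0 \le k \le L$ let $\sigma^{(k)}$ be the state produced by running the first $k$ layers with data $\xx'$ and the remaining $L-k$ layers with data $\xx$, so that $\sigma^{(0)} = \rho_{\ttheta}(\xx)$ and $\sigma^{(L)} = \rho_{\ttheta}(\xx')$. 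The triangle inequality gives
\[ \lVert \rho_{\ttheta}(\xx) - \rho_{\ttheta}(\xx') \rVert_1 \le \sum_{k=1}^{L} \lVert \sigma^{(k-1)} - \sigma^{(k)} \rVert_1 . \]
The hybrids $\sigma^{(k-1)}$ and $\sigma^{(k)}$ agree on the first $k-1$ layers, differ only in whether layer $k$ encodes $\xx$ or $\xx'$, and are then acted on by the \emph{same} noisy channel $\calE_k$ realizing layers $k+1,\dots,L$ with data $\xx$.

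First I would control the difference layer $k$ injects. Since only the data-embedding gates of a single layer differ between the two branches, the one-layer case of \cref{lem:continuity_of_parametrized_states} shows that, immediately after layer $k$, the two branch states $\alpha_k,\beta_k$ have a traceless Hermitian difference $\Gamma_k \coloneqq \alpha_k - \beta_k$ with $\lVert \Gamma_k \rVert_1 \le \epsilon$. As $\calE_k$ is completely positive and trace preserving, the crude bound is simply $\lVert \sigma^{(k-1)} - \sigma^{(k)} \rVert_1 = \lVert \calE_k(\Gamma_k) \rVert_1 \le \epsilon$. The crux is a second, noise-aware bound that keeps the $\epsilon$ factor while extracting the exponential decay. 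To obtain it, decompose $\Gamma_k = \Gamma_k^+ - \Gamma_k^-$ into positive and negative parts; tracelessness forces $\Tr\Gamma_k^+ = \Tr\Gamma_k^- = \tfrac12 \lVert \Gamma_k \rVert_1$, and I call this common value $\mu \le \epsilon/2$, so that $\Gamma_k^{\pm}/\mu$ are genuine states. Applying the relaxation bound to each — noting that $\calE_k$ contains $(L-k)(\ell+1)$ noise steps and that $D(\,\cdot\, \fatpipe \omega) \le n$ for \emph{every} state, so the estimate holds for arbitrary inputs — and routing the triangle inequality through $\omega$ yields
\[ \lVert \calE_k(\Gamma_k) \rVert_1 = \mu \Big\lVert \calE_k\big(\tfrac{\Gamma_k^+}{\mu}\big) - \calE_k\big(\tfrac{\Gamma_k^-}{\mu}\big) \Big\rVert_1 \le 2\mu\, p^{(L-k)(\ell+1)} \sqrt{2n} \le \epsilon\, p^{(L-k)(\ell+1)} \sqrt{2n}. \]

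With both per-term bounds available, I would assign to each term the tighter of the two. Reindexing by the number $m = L-k$ of noisy layers that follow the injection, the crude bound $\epsilon$ wins while $\sqrt{2n}\,p^{m(\ell+1)} > 1$, and the noise-aware bound wins once $\sqrt{2n}\,p^{m(\ell+1)} \le 1$. A short computation shows that $L_0 = \lceil 1 + \tfrac{1}{2(\ell+1)} \tfrac{\log 2n}{\log(1/p)} \rceil$ is exactly the smallest $m$ for which $\sqrt{2n}\,p^{m(\ell+1)} \le p^{\ell+1} \le 1$. Hence I use $\epsilon$ for the $L_0$ terms with $m < L_0$ and the noise-aware bound for the terms with $m \ge L_0$, which (for $L \ge L_0$, so that $L_0$ such head terms exist) gives
\[ \lVert \rho_{\ttheta}(\xx) - \rho_{\ttheta}(\xx') \rVert_1 \le \epsilon L_0 + \epsilon \sqrt{2n} \sum_{m=L_0}^{L-1} p^{m(\ell+1)} . \]
Summing the geometric tail, $\sum_{m=L_0}^{L-1} p^{m(\ell+1)} = (p^{L_0(\ell+1)} - p^{L(\ell+1)})/(1-p^{\ell+1})$, and substituting $\epsilon = \Delta \lVert \xx - \xx' \rVert_1$ reproduces \cref{eqn:geometric_bound}. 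When $L \le L_0$ there are not enough layers to reach the decaying regime, every term is bounded by $\epsilon$, and one only recovers the noiseless estimate $L \Delta \lVert \xx - \xx' \rVert_1$.

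The main obstacle is the second, noise-aware per-term bound. The naive route of bounding $\lVert \calE_k(\Gamma_k) \rVert_1$ by $\lVert \calE_k(\alpha_k) - \omega \rVert_1 + \lVert \omega - \calE_k(\beta_k) \rVert_1$ discards the $\epsilon$ factor and only yields $2\sqrt{2n}\,p^{m(\ell+1)}$, which is too weak to give the stated bound for small $\lVert \xx - \xx' \rVert_1$. The Jordan--Hahn normalization above is precisely what rescues the $\epsilon$ dependence, and it hinges on the relaxation estimate holding for \emph{arbitrary} intermediate states rather than only for $\rho_0$ — which is legitimate because $D(\psi \fatpipe \omega) \le n$ uniformly. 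The remaining work is bookkeeping: counting the noise steps inside $\calE_k$ correctly as $(L-k)(\ell+1)$, and checking that the ceiling defining $L_0$ matches the crossover of the two bounds so that the split into a linear head and a geometric tail is exactly the claimed expression.
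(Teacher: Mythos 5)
Your proof is correct and arrives at exactly the stated bound, but it extracts the exponential decay through a genuinely different mechanism than the paper. Both arguments telescope the total difference into per-layer injections of size at most $\epsilon = \Delta \lVert \xx - \xx' \rVert_1$, yielding the same sum $\epsilon \sum_{m=0}^{L-1} \min\{1, p^{m(\ell+1)}\sqrt{2n}\}$, and the same head/tail split at $L_0$. The difference is the direction in which the noise is credited. The paper peels layers from the end and uses the annihilation identity $(\calV^{\dagger}\calU - \bbI)[\omega] = 0$, so the term for layer $i$ is bounded by $\Gamma_i \lVert \rho_{i-1} - \omega \rVert_1$: the decay comes from the noise accumulated \emph{before} the differing gate, and the injected difference is then only propagated forward by contractivity of CPTP maps. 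You instead let the injection at layer $k$ carry its full weight $\epsilon$ and show that it is contracted by the noise \emph{after} layer $k$: the Jordan--Hahn decomposition $\Gamma_k = \mu(\tau^+ - \tau^-)$ of the traceless difference into rescaled states, combined with the universal relaxation of arbitrary inputs to $\omega$, gives $\lVert \calE_k(\Gamma_k) \rVert_1 \leq 2\mu\, p^{(L-k)(\ell+1)}\sqrt{2n} \leq \epsilon\, p^{(L-k)(\ell+1)}\sqrt{2n}$, with the triangle-inequality factor of $2$ absorbed precisely because $2\mu = \lVert \Gamma_k \rVert_1$. This normalization trick is the one non-obvious step and you execute it correctly, including the point on which its validity hinges: the relative-entropy decay must be applied to \emph{arbitrary} intermediate states, which is legitimate since $D(\cdot \fatpipe \omega) \leq n$ uniformly, whereas the paper only ever needs the relaxation bound for the actual circuit states but needs the annihilation identity in exchange. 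Your per-term estimate is in fact a slightly more general statement -- a trace-norm contraction bound for traceless Hermitian operators under the noisy sub-circuit -- while the paper's is tied to the specific structure of the unitary difference $\calV_t^{\dagger}\calU_t - \bbI$; either mechanism, after reindexing $m = L - k$, produces the identical geometric tail $\sqrt{2n}\,(p^{L_0(\ell+1)} - p^{L(\ell+1)})/(1 - p^{\ell+1})$ and the same conclusion for $L \leq L_0$.
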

Contrary to the noise-only estimate, the above bound does not decay to zero as $L \to \infty$, which means it can only improve on the estimate of \cref{eqn:min_of_two_bounds} in a limited parameter region. It does, however, always improve on the naive continuity estimate \ref{eqn:continuity_bound}. \cref{fig:noisy_estimates_plot} shows a comparison of the three different bounds. 

\begin{figure}
    \centering
    \includegraphics{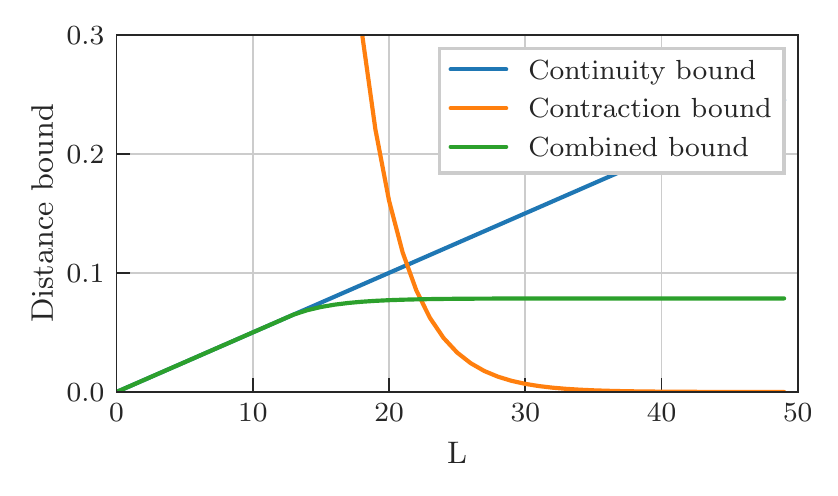}
    \caption{Comparison of the three different bounds for noisy circuits. We can distinguish (blue) the linearly growing continuity bound of \cref{eqn:continuity_bound}, (orange) the exponentially contracting bound of \cref{eqn:contraction_bound} and (green) the geometrically growing combined bound of \cref{eqn:geometric_bound}. The constants used for the plot are $n=1000$, $\lVert \xx - \xx' \rVert_1=0.005$, $\Delta=1$, $\ell=2$ and $p=0.9$.
    }
    \label{fig:noisy_estimates_plot}
\end{figure}

Now, we can connect back to the setting of single-shot classification in a similar way as to \cref{theorem_bayesian_bound}. The derivation can proceed analogously, only the constant $L$ is replaced with
\begin{align}
    L \rightarrow L_0 + \sqrt{2n}\frac{p^{L_0(\ell+1)} - p^{L(\ell + 1)}}{1-p^{\ell + 1}}.
\end{align}

Combining all three bounds we have derived for the noisy case gives us an upper bound on the achievable distance between the embedded average states per majority class predicted by the quantum classifier and as such allows us to analyze the possibility of single-shot classification by noisy quantum classifiers.

\section{Generically accurate single-shot models are hard to learn} \label{sect:hardness}

Having established a theoretical foundation of single-shot quantum machine learning, we have learned that single-shotness is a property of a trained model that is not necessarily a given. It depends on the specific way a quantum learning model embeds classical data into quantum states, and only materializes if quantum states of the same predicted label cluster together -- or in other words, they are not too close to states of a different predicted label. 

From this intuition, we see that it should in theory be possible to construct an embedding that enables single-shot classification for all possible labelings of a dataset. This is achieved by embedding all sufficiently separated inputs into mutually orthogonal directions of Hilbert space. One expect that, for example, we can construct such an embedding by using a large number of encoding gates interleaved with deep layers of random unitary gates. However, it is evident that such an embedding would result in poor generalization performance for any POVM inferred from the training data. Previously unseen datapoints would be embedded into the orthogonal complement of the embedded training data in Hilbert space. There, we cannot infer a good classification and the best strategy would be random guessing.

This \enquote{thought experiment} highlights an inherent problem of (quantum) learning models, namely that there is a fine trade-off between expressivity, trainability and generalization performance, which is the subject of the ongoing research in quantum machine learning~\cite{du2021learnability,caro2022generalization,du2023problem-dependent,gil-fuster2024understanding}. In this section, we will show that this trade-off also influences the single-shot property of a learning model.

To this end, we consider models that can be considered as \emph{generically accurate single-shot models}. These are models built from an embedding $\rho(\xx)$ such that for any possible labeling of the data, there exists a POVM $\{ \Pi_y \}_{y\in\calY}$ that accurately classifies in a single-shot way. 
Note that contrary to the preceding sections, where we treated single-shotness in a way that is completely detached from accuracy, here we additionally take it into account. 
For this notion to be sensible, we need to assume that all the datapoints fed into the model can't be arbitrarily close to each other. Formally, we assume that all possible datapoints encountered by the model have a lower bound on their distance in some norm: $\lVert \xx - \xx' \rVert > \xi$. Intuitively, we expect that a \emph{generically accurate single-shot model} should map all points that are separated by at least $\xi$ to (approximately) orthogonal directions in Hilbert space.

To study this specific class of models, we rely on the results of Ref.~\cite{cheng2015learnability}. There, the authors study the essentially similar problem of learning a POVM effect $\Pi$ from examples of the form 
\begin{align}
    \calD_m = \{ ( \rho_i, \Tr[ \rho_i \Pi] ) \}_{i=1}^m,
\end{align}
where the authors emphasize that the sample complexity does not change significantly when single measurement outcomes are given instead of expectation values. To connect this to our setting we consider the setting of binary classification. A classical datapoint $\xx$ can be seen as a classical description of $\rho(\xx)$, which means we already know the first half of the data tuples. The second half is then obtained by realizing that $\Tr[\rho(\xx) \Pi]$ can be seen as an instance of a label that should be either $0$ or $1$. As we assumed that there actually exists a POVM that realizes the desired classification, we can just assume that the given label -- either 0 or 1 -- is the expected value of the targeted POVM effect $\Pi$ that realizes the correct classification. The fact that the classifier is single-shot then enforces that the expected value of the effect $\Pi$ is actually identical to the target label. Hence, the training data of the classification task
\begin{align}
    \calD_m = \{ (\xx_i, y_i) \}_{i=1}^m
\end{align}
can be seen as equivalent to the setup of Ref.~\cite{cheng2015learnability} in the case of a generically accurate single-shot model.

Now, given this equivalence, Ref.~\cite{cheng2015learnability} posits that at least
\begin{align}
    m \geq O\left( \frac{D}{\epsilon^2}\right)
\end{align}
samples are necessary to learn a POVM effect on a $D$-dimensional Hilbert space to precision $\epsilon$ in operator norm. We are left to analyze what the dimension of the Hilbert space is that we need to consider. To  this end, we go back to our assumption that inputs that are $\xi$-apart in some norm should be mapped to different corners of the Hilbert space. This means we have that $D$ is approximately the number of balls in the norm of our choosing we need to cover the input space $\calX$. If the space has some sort of characteristic length $L$ and a dimension $d$, we expect to need
\begin{align}
    D \sim O\left( \left[\frac{L}{\xi}\right]^d  \right)
\end{align}
dimensions, which is inefficient in the size of the input as we expected from our thought experiment. 
This is also in line with the conclusions of Ref.~\cite{du2023problem-dependent}, which shows that overly expressive models have bad generalization performance.

\section{Conclusion} \label{sec:conclusion}

If quantum machine learning is ever to become useful, many obstacles have to be overcome. One of them is the \emph{measurement problem} that arises directly from the probabilistic nature of quantum mechanics. Predictions from quantum learning models are inherently random, and as such many executions of a quantum learning model have to be aggregated to obtain an actual prediction. 

In this work, we proposed the concept of \emph{single-shot} quantum machine learning. We established a rigorous definition of when a quantum classifier can circumvent the measurement problem by providing labels in a near-deterministic way. 

Exploiting a close connection between quantum classifiers and quantum multi-hypothesis testing allowed us to establish that \emph{how} single-shot a quantum classifier can be is fundamentally constrained by the distinguishability of the embedded quantum states. Considering that embeddings are usually realized by quantum circuits, we additionally showed how a certain depth of these circuits is necessary to realize a single-shot classifier.

Finally, we additionally established that a learning model cannot be single-shot in a generic way, i.e.\ for all possible labelings of the input. In that case, it would become too expressive and there would be no meaningful way for such a model to generalize well. 

This work can only be considered the start of our way to overcome the measurement problem. It invites future research in many directions. First of all, we only considered classification tasks in this work. A generalization to regression would surely also be possible and similar fundamental bounds could be expressed in terms of quantum metrology reductions instead of multi-hypothesis testing~\cite{meyer2023quantum}. 

Beyond extending the definitions and fundamental treatment of single-shot quantum machine learning, it is an immediately pressing question to understand how we can construct models that are single-shot? How can we enforce this property during training to obtain accuracy and single-shotness? A possible inspiration in this direction is to consider ways of training that are immediately related to the distinguishability of the embedded quantum states~\cite{lloyd2020quantum}. Can we also find explicit constructions for learning models or embedding circuits that enforce a certain degree of single-shotness?

\section{Acknowledgements}
The authors wish to thank Matthias Caro for his support in obtaining the continuity bounds used in this work.
This work has been supported by the DFG (CRC 183, FOR 2724), by the BMBF (Hybrid), the BMWK (PlanQK, EniQmA), the Munich Quantum Valley (K-8), QuantERA (HQCC) and the Einstein Foundation (Einstein Research Unit on Quantum Devices). This work has also been funded by the DFG under Germany's Excellence Strategy – The Berlin Mathematics Research Center MATH+ (EXC-2046/1, project ID: 390685689). It is further supported by the Government of Spain (Severo Ochoa CEX2019-000910-S, FUNQIP and European Union NextGenerationEU PRTR-C17.I1), Fundació Cellex, Fundació Mir-Puig, Generalitat de Catalunya (CERCA program) and European Union (PASQuanS2.1, 101113690). ER is a fellow of Eurecat's \enquote{Vicente López} PhD grant program.

\bibliography{main}

\clearpage

\appendix
\onecolumngrid

\section{Multi-hypothesis testing to binary reduction}\label{appendix_multi_ht_to_binary}

This section is devoted to the proof of the reduction from multi-hypothesis testing to binary hypothesis testing presented in the main text.

\begin{proof}[Proof of \cref{lemma_reduction}]
When facing a multi-hypothesis testing problem, additional side information can only improve our performance. Imagine now that a state $\rho_i$ is selected with probability $p_i$. A third party then selects the label $j$ of the other state in the set that is most difficult to distinguish from $\rho_i$ given the prior probabilities, i.e. 
\begin{align}
j = \operatornamewithlimits{argmax}_{1 \leq j \leq r} P_e^{*}(p_i\rho_i, p_j \rho_j).
\end{align}
The third party then gives us the information that the state we need to detect is either $\rho_i$ or $\rho_j$. The best strategy we can then perform is the optimal strategy for the binary discrimination of $\rho_i$ and $\rho_j$, leading to the bound
\begin{align}
    \begin{split}
    P_e^{*}&( \{p_i \rho_i \}_{i=1}^r )\\
    &\geq \sum_{i=1}^r p_i \max_{1 \leq j \neq i \leq r} P_e^{*}\left(\frac{p_i}{p_i + p_j} \rho_i, \frac{p_j}{p_i + p_j} \rho_j\right) 
    \end{split}\\
    &\geq \min_{1 \leq i \leq r} \max_{1 \leq j \neq i \leq r} P_e^{*}\left(\frac{p_i}{p_i + p_j} \rho_i, \frac{p_j}{p_i + p_j} \rho_j\right), 
\end{align}
where we divided the prior probabilities in the error probability term by $p_i + p_j$ to account for the fact that we conditioned on the fact that it is either $p_i$ or $p_j$ and the $p_i$ in front is the probability of obtaining $i$ in the first place. We lower-bound the convex combination by the smallest of the summed-up terms to arrive at the lemma. 
\end{proof}

\section{Continuity bounds for variational quantum circuits}\label{appendix_continuity}

This section is devoted to establishing the desired continuity estimates for quantum embeddings. Let us first rigorously define what we understand as a parametrized quantum circuit.
\begin{definition}[Parametrized quantum circuit]\label{def:pqc}
A parametrized quantum circuit is a unitary transformation obtained by successively applying gates on a quantum computer, where the parametrized gates are time evolutions under self-adjoint generating Hamiltonians $G_i$:
\begin{align}
    U(\ttheta) = \prod_{i=1}^P e^{-i \theta_i G_i} V_i,
\end{align}
the non-parametrized gates $V_i$ can be arbitrary. We denote the set of all generators as $\calG$.
\end{definition}

We can then establish the following fundamental continuity estimate for parametrized quantum circuits in diamond norm.
\begin{theorem}[Diamond norm continuity]\label{thm:diamond_norm_continuity}
Let $U(\ttheta)$ denote a parametrized quantum circuit with set of generators $\calG$ as in \cref{def:pqc} and $\calU(\ttheta)[\rho] = U(\ttheta) \rho U^{\dagger}(\ttheta)$ its associated channel. Then, the channel obeys the following continuity bound:
\begin{align}
    \lVert \calU(\ttheta)^{\otimes m} - \calU(\ttheta')^{\otimes m}\rVert_{\diamond} \leq m \Delta \lVert \ttheta - \ttheta' \rVert_1,
\end{align}
where $\Delta$ is the largest spectral spread of all generating Hamiltonians,
\begin{align}
    \Delta  = \max  \{ \lambda_{\max}(G) - \lambda_{\min}(G) \, | \, G \in \calG \}.
\end{align}
There exists circuit that saturate the above bound.
\end{theorem}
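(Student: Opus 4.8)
The plan is to reduce the bound to a statement about a \emph{single} parametrized gate and then to lift it back to the full circuit by two telescoping arguments: one over the $P$ gates of $U(\ttheta)$ and one over the $m$ tensor copies. Throughout I would use two elementary facts about the diamond norm: it is invariant under pre- and post-composition with unitary channels (since these act as isometries), and it is multiplicative under tensoring with a trace-preserving channel, so that tensoring a channel difference with a fixed unitary channel leaves its diamond norm unchanged.

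The heart of the argument is the single-gate estimate, for which I would invoke the exact expression for the diamond distance between two unitary channels $\calU_W[\rho] = W \rho W^{\dagger}$ and $\calU_{W'}[\rho] = W' \rho W'^{\dagger}$ in terms of the spectrum of $W^{\dagger} W'$. Writing the eigenvalues of $W^{\dagger} W'$ as $\{ e^{i\phi_j} \}_j$ and letting $\Theta$ be the width of the smallest arc of the unit circle containing all the $\phi_j$, one has $\tfrac12 \lVert \calU_W - \calU_{W'} \rVert_{\diamond} = \sin(\Theta/2)$ for $\Theta \le \pi$ and $=1$ otherwise; in particular $\lVert \calU_W - \calU_{W'} \rVert_{\diamond} \le \Theta$ since $2\sin(x/2) \le x$, and this remains valid for $\Theta > \pi$ because then $\Theta > 2 \ge \lVert \calU_W - \calU_{W'} \rVert_{\diamond}$. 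Specializing to $W = e^{-i\theta G}$ and $W' = e^{-i\theta' G}$ gives $W^{\dagger} W' = e^{-i(\theta' - \theta) G}$, whose eigenvalues are $e^{-i(\theta'-\theta)\lambda}$ for $\lambda$ ranging over the spectrum of $G$; the arc width is therefore exactly $|\theta - \theta'|(\lambda_{\max}(G) - \lambda_{\min}(G))$, yielding $\lVert \calU_{e^{-i\theta G}} - \calU_{e^{-i\theta' G}} \rVert_{\diamond} \le |\theta - \theta'|\,\Delta$.

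With the single-gate bound in hand, I would change the parameters $\theta_i \to \theta_i'$ one at a time to build a chain of circuits whose consecutive members differ in exactly one parametrized gate while sharing all surrounding fixed unitaries and the remaining parametrized gates. Composition-invariance collapses the diamond distance between consecutive members to the single-gate distance above, and the triangle inequality gives $\lVert \calU(\ttheta) - \calU(\ttheta') \rVert_{\diamond} \le \Delta \lVert \ttheta - \ttheta' \rVert_1$; here one uses that the generator of the $i$-th gate, acting as $G_i \otimes \Id$ on the full register, has the same spectral spread as $G_i$. For the tensor power I would note that $\calU(\ttheta)^{\otimes m}$ is again a parametrized circuit, now on $m$ registers, in which each parameter $\theta_i$ appears $m$ times with a generator of unchanged spread; running the same telescoping over these $mP$ gates (equivalently, telescoping $\calU(\ttheta)^{\otimes m} - \calU(\ttheta')^{\otimes m}$ copy by copy and using tensor-multiplicativity) multiplies the bound by $m$, giving the claimed $m \Delta \lVert \ttheta - \ttheta' \rVert_1$. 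For the saturation claim I would take a single qubit, a one-gate circuit with generator of eigenvalues $\pm \Delta/2$, and $m = 1$: there $\lVert \calU(\theta) - \calU(\theta') \rVert_{\diamond} = 2\sin(\Delta|\theta - \theta'|/2)$ agrees with $\Delta|\theta - \theta'|$ to leading order as $\theta' \to \theta$, so the constant $\Delta$ cannot be improved.

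The one genuinely nontrivial ingredient, and the step I expect to be the main obstacle, is the exact diamond-distance formula for unitary channels together with the accompanying arc-width computation; I would state and cite it carefully and double-check the boundary regime $\Theta > \pi$. Everything else -- the two telescopings, the invariance and multiplicativity of the diamond norm, and the identity-tensoring remark -- is routine once that estimate is available.
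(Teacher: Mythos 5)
Your proof of the inequality itself is correct, but it takes a genuinely different route from the paper's. You obtain the single-gate estimate from the exact characterization of the diamond distance between unitary channels (the arc-width formula $2\sin(\Theta/2)$ for the spectrum of $W^{\dagger}W'$) and then telescope directly in diamond norm, both over the $P$ gates and over the $m$ tensor factors. The paper never touches that formula: it proves continuity of a single gate in \emph{operator} norm, $\lVert e^{-i\theta G}-e^{-i\theta' G}\rVert_\infty \le \lVert G\rVert_\infty\,|\theta-\theta'|$, improves $\lVert G\rVert_\infty$ to $\Delta(G)/2$ by optimizing over a global phase $G \mapsto G + r\bbI$, telescopes in operator norm, and only at the end converts to the diamond norm via the generic estimate $\lVert \calU - \calU'\rVert_\diamond \le 2\lVert U - U'\rVert_\infty$ (Lemma 5 of Ref.~\cite{caro2022generalization}); the tensor-power step is then the subadditivity property from Ref.~\cite{watrous2018theory}, essentially the same as your copy-by-copy telescoping. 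Your route buys exactness at the single-gate level at the price of invoking a heavier black box; the paper's route is more elementary and self-contained. One small imprecision on your side: the arc width of the spectrum of $e^{-i(\theta'-\theta)G}$ is at most, not exactly, $|\theta-\theta'|\Delta(G)$ (wrap-around on the circle can shrink it), but the inequality goes in the direction you need, so nothing breaks.

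The saturation claim is where your proposal falls short. Your example is a one-parameter, one-qubit circuit with $m=1$: it shows that the constant $\Delta$ cannot be improved for a single gate, but it says nothing about the multi-parameter structure of the bound, because for a single parameter all $\ell_p$ norms of $\ttheta - \ttheta'$ coincide. The substantive content of the saturation statement is that the $\ell_1$ accumulation over many parameters is unavoidable, i.e.\ that one cannot replace $\lVert\ttheta-\ttheta'\rVert_1$ by, say, $\lVert\ttheta-\ttheta'\rVert_\infty$ or $\lVert\ttheta-\ttheta'\rVert_2$. The paper establishes exactly this with the parallel-rotation circuit $U(\ttheta)=\bigotimes_{i=1}^n R_{Z}(\theta_i)$, for which a first-order expansion gives $\lVert\calU(0)-\calU(\diff\ttheta)\rVert_\diamond = 2\lVert\diff\ttheta\rVert_1 + O(\lVert\diff\ttheta\rVert_1^2)$, matching the bound with $\Delta(Z)=2$ (and parallel repetition handles $m>1$). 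Your exact-formula machinery would actually make this easy to repair: the eigenvalues of $\bigotimes_j R_Z(\diff\theta_j)$ have phases given by the signed sums $\sum_j (\pm 1)\,\diff\theta_j$, so the arc width is $2\lVert\diff\ttheta\rVert_1$ and the exact formula reproduces the paper's conclusion. But as written, the single-parameter example does not prove what the theorem's saturation clause is there to prove.
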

\begin{proof}
We first establish continuity of a single parametrized gate in operator norm. Observe that
\begin{align}
    \lVert R_G(\theta) - R_G(\theta') \rVert_{\infty} &= \lVert e^{-i \theta G} - e^{-i \theta' G} \rVert_{\infty} \\
    &= \lVert e^{-i (\theta - \theta') G} - \bbI \rVert_{\infty} \\
    &= \sup_{\lambda \in \operatorname{spec}(H)} |\exp(-i \lambda (\theta - \theta') ) - 1 | \\
    &\leq \sup_{\lambda \in \operatorname{spec}(H)} |\lambda(\theta - \theta') | \\
    &\leq \lVert G \rVert_{\infty} |\theta - \theta'|,
\end{align}
where the first equality is the definition of the rotation gate, the second equality follows from unitary invariance of the operator norm, the third equality from the fact that the identity commutes with the unitary and the first inequality from the upper bound $|\exp(-i x) - 1| \leq |x|$ for $x\in\mathbb{R}$.

We can use the freedom of adding a global phase, which corresponds to adding a multiple of the identity to the generator, to optimize this bound. The minimal operator norm is 
\begin{align}
    \min_{r \in \bbR} \lVert G + r \bbI \rVert_{\infty} = \frac{\lambda_{\max}(G) - \lambda_{\min}(G)}{2} \eqqcolon \frac{\Delta(G)}{2},
\end{align}
where we defined the spectral spread $\Delta(H)$.

This reasoning easily expands to the full circuit $U(\ttheta)$, which can be seen by noting that a straightforward application of the triangle inequality and the unitary invariance gives us a way to \enquote{chop away} gates one after another:
\begin{align}
    \lVert U R_G(\theta) - U' R_G(\theta') \rVert_{\infty} &= \lVert U R_G(\theta - \theta') - U' \rVert_{\infty} \\
    &= \lVert U R_G(\theta - \theta') - U' + U - U \rVert_{\infty} \\
    &\leq \lVert U R_G(\theta - \theta') - U \rVert_{\infty} + \lVert U - U' \rVert_{\infty} \\
    &= \lVert  R_G(\theta - \theta') - \bbI \rVert_{\infty} + \lVert U - U' \rVert_{\infty} \\
    &\leq \frac{\Delta(G)}{2} |\theta - \theta'| + \lVert U - U' \rVert_{\infty}.
\end{align}
Applying this relation recursively yields the bound
\begin{align}
    \lVert U(\ttheta) - U(\ttheta') \rVert_{\infty} \leq \sum_{i=1}^P \frac{\Delta(G_i)}{2} |\theta_i - \theta_i'| \leq \left(\max_{i} \frac{\Delta(G_i)}{2}\right) \lVert \ttheta - \ttheta'\rVert_1.
\end{align}

Lemma 5 of Ref.~\cite{caro2022generalization} now allows us to use the operator norm estimate to bound the diamond norm:
\begin{align}
    \lVert \calU(\ttheta) - \calU(\ttheta') \rVert_{\diamond} 
    \leq 2 \lVert U(\ttheta) - U(\ttheta') \rVert_{\infty}
    \leq \left(\max_{i} \Delta(G_i)\right) \lVert \ttheta - \ttheta'\rVert_1.
\end{align}

To conclude the bound claimed in the theorem statement we make use of some fundamental properties of the diamond norm (see e.g.\ Proposition 3.48 of Ref.~\cite{watrous2018theory}) which we can use to treat tensor products of channels:
\begin{align}
    \lVert \calN \otimes \calM - \calN' \otimes \calM' \rVert_{\diamond}
    &\leq \lVert \calN \otimes \bbI - \calN' \otimes \bbI \rVert_{\diamond} + \lVert \bbI \otimes \calM - \bbI \otimes \calM' \rVert_{\diamond} \\
    &= \lVert \calN - \calN' \rVert_{\diamond} +  \lVert \calM - \calM' \rVert_{\diamond}.
\end{align}
The above implies that
\begin{align}
    \lVert \calN^{\otimes m} - \calM^{\otimes m} \rVert_{\diamond} \leq m \lVert \calN - \calM \rVert_{\diamond},
\end{align}
which indeed implies the desired statement.

The tightness of the bound is established by considering the saturating example of parallel $Z$ rotations on $n$ qubits:
\begin{align}
    U(\ttheta) = \bigotimes_{i=1}^n R_{Z}(\theta_i).
\end{align}
We will use the following fact for self-adjoint and unitary generators
\begin{align}
    R_Z(\theta) = \bbI \cos \theta -i Z \sin \theta
\end{align}
We consider an infinitesimal perturbation around $\ttheta =0$, for which we have that
\begin{align}
    \bigotimes_{j=1}^n R_{Z}(\diff \theta_j) &= \bigotimes_{j=1}^n \left( \bbI \cos \diff \theta_j -i Z \sin \diff \theta_j\right) \\
    &= \bigotimes_{j=1}^n \left( \bbI -i Z \diff \theta_j\right)\\
    &= \bigotimes_{j=1}^n \bbI - \sum_{j=1}^n i Z_j \diff \theta_j + O(\lVert \diff \ttheta \rVert_{1}^2),
\end{align}
where we defined $Z_j = \bbI^{\otimes j-1} \otimes Z \otimes \bbI^{\otimes n - j}$. With this at hand, we see that
\begin{align}
    \lVert U(0) - U(\diff \ttheta) \rVert_{\infty} &= \Big\lVert \bbI - \bigotimes_{j=1}^n R_{Z}(\diff \theta_j) \Big\rVert_{\infty} 
    = \Big\lVert \sum_{j=1}^n Z_j \diff \theta_j \Big\rVert_{\infty}  + O(\lVert \diff \ttheta \rVert_{1}^2)
\end{align}
As $\sum_{i=1}^n Z_i \diff \theta_i$ has terms acting on independent tensor factors, we have a simple structure for the eigenstates which we can understand as labeled by bitstrings $\bb$, such that
\begin{align}
    \sum_{j=1}^n Z_j \diff \theta_j \ket{\bb} = \sum_{j=1}^n (-1)^{b_j} \diff \theta_j \ket{\bb}.
\end{align}
We see that the eigenvalues are all possible signed sums of the $\diff \theta_j$, which means that
\begin{align}
    \Big\lVert \sum_{j=1}^n Z_j \diff \theta_j \Big\rVert_{\infty} &= \max_{\bb}\sum_{j=1}^n (-1)^{b_j} \diff \theta_j 
    = \lVert \diff \ttheta \rVert_1.
\end{align}
Therefore,
\begin{align}
    \lVert \calU(0) - \calU(\diff \ttheta) \rVert_{\diamond} &= 2 \lVert U(0) - U(\diff \ttheta) \rVert_{\infty} 
    = 2 \lVert \diff \ttheta \rVert_1 +  O(\lVert \diff \ttheta \rVert^2),
\end{align}
which -- noting that $\Delta(Z) = 2$ -- saturates the bound as claimed. We now showed the saturation specifically for $m=1$, but a similar construction works for general $m$ by just repeating the above construction in parallel.
\end{proof}

We can now use this result to prove \cref{lem:continuity_of_parametrized_states} of the main text.
\begin{proof}[Proof of \cref{lem:continuity_of_parametrized_states}]
In our setup, we encounter a parametrized quantum state with two kinds of parameters: the data $\xx$ and the variational parameters $\ttheta$. As the variational parameters are equal for the two circuits we compare, we can disregard them and absorb them into the fixed unitaries of the circuit. The encoding of the data is repeated $L$ times in our example. This we can take care of by viewing each of these instances as different parameters $\{ \tilde{x}_j \}_{j=1}^{Ld}$ and noting that $\lVert \tilde\xx \rVert_1 = L \lVert \xx \rVert_1$. Hence, the diamond norm continuity of \cref{thm:diamond_norm_continuity} implies that
\begin{align}
    \lVert \calU_{\ttheta}(\xx) - \calU_{\ttheta}(\xx') \rVert_{\diamond} \leq L \Delta \lVert \xx - \xx' \rVert_1,
\end{align}
where the maximal spectral spread $\Delta$ is now calculated only over the gates encoding the parameters.
We can conclude the bounds on the output states through a chain of straightforward inequalities
\begin{align}
    \lVert \rho_{\ttheta}(\xx) - \rho_{\ttheta}(\xx') \rVert_1 &= \lVert \calU_{\ttheta}(\xx)[\rho_0] - \calU_{\ttheta}(\xx')[\rho_0] \rVert_1 \\
    &\leq  \lVert \calU_{\ttheta}(\xx) - \calU_{\ttheta}(\xx') \rVert_{\diamond} \lVert \rho_0 \rVert_1 \\
    &\leq L \Delta \lVert \xx - \xx' \rVert_1.
\end{align}
\end{proof}

\section{Proof of \cref{theorem_bayesian_bound} -- continuity bound for the noiseless case}\label{proof_continuity_noiseless}

Because we reduced the task of single-shot classification to multi-hypothesis testing in \cref{thm:bayesian_error_probability_lower_bound}, it suffices to establish a lower bound on the error probability of the multi-hypothesis test between the average states of the majority labels.

Putting together \cref{lemma_reduction} and \cref{helstrom} we find:
\begin{align}
    P_e^{*} & \geq \min_{1 \leq i \leq r} \max_{1 \leq j \neq i \leq r} P_e^{*}\left(\frac{p_i}{p_i + p_j} \rho_i, \frac{p_j}{p_i + p_j} \rho_j\right)\\
    & = \min_{1 \leq i \leq r} \max_{1 \leq j \neq i \leq r} \frac{1}{2} \left(1 - \bigg\| \frac{p_i}{p_i + p_j} \rho_i - \frac{p_j}{p_i + p_j} \rho_j \bigg\|_1 \right).
\end{align}
Analyzing the elements inside the trace distance and renaming
\begin{align}
p_+ = \frac{p_i}{p_i + p_j}, \;\; \rho_+ = \rho_i \;\; \calX_+ = \calX_i\\
p_- = \frac{p_j}{p_i + p_j}, \;\; \rho_- = \rho_j \;\; \calX_- = \calX_j,
\end{align}
we see that
\begin{align}
    \|p_+\rho_+ - p_-\rho_-\|_1&=\bigg\|\int_{\mathcal{X_+}} \diff p(\xx_+) \, \rho(\xx_+) - \int_{\mathcal{X_-}} \diff p (\xx_-) \, \rho(\xx_-) \bigg\|_1\\
    &= \bigg\|\int_{\mathcal{X_+}} \diff p(\xx_+) \, \rho(\xx_+) \int_{\mathcal{X_-}}\frac{\diff p(\xx_-)}{p_-} - \int_{\mathcal{X_-}}  \diff p(\xx_-) \rho(\xx_-) \, \int_{\mathcal{X_+}}\frac{\diff p(\xx_+)}{p_+}\bigg\|_1\\
    &= \bigg\|\frac{1}{p_+p_-}\int_{\mathcal{X_+}} \diff p(\xx_+) \int_{\mathcal{X_-}}\diff p(\xx_-) \, \left(p_+\rho(\xx_+)-p_-\rho(\xx_-)\right) \bigg\|_1\\
    &\leq \frac{1}{p_+p_-}\int_{\mathcal{X_+}}dp(\xx_+) \int_{\mathcal{X_-}}dp(\xx_-)\|p_+\rho(\xx_+)-p_-\rho(\xx_-) \|_1.
\end{align}
Where we have applied the triangle inequality. Looking further into the one-norm:
\begin{align}
    \|p_+\rho(\xx_+)-p_-\rho(\xx_-) \|_1 
    &= \|p_+\rho(\xx_+)-p_-\rho(\xx_-) -p_-\rho(\xx_+) + p_-\rho(\xx_+)\|_1 \\
    &= \|(p_+-p_-)\rho(\xx_+)+p_-(\rho(\xx_+)-\rho(\xx_-))\|_1 \\
    &\leq |p_+-p_-|+p_-\|\rho(\xx_+)-\rho(\xx_-)\|_1,
\end{align}
where we have again applied the triangle inequality. Notice that we could have used $p_+$ instead of $p_-$ in the added terms, therefore:
\begin{equation}
    \|p_+\rho(\xx_+)-p_-\rho(\xx_-) \|_1 \leq |p_+-p_-|+\min(p_+,p_-)\|\rho(\xx_+)-\rho(\xx_-)\|_1.
\end{equation}
Now everything is ready to apply the bound from \cref{lem:continuity_of_parametrized_states}:
\begin{equation}
    \|\rho(\xx_+)-\rho(\xx_-) \|_1 \leq L\Delta \|\xx_+-\xx_-\|_1.
\end{equation}
Putting everything together yields
\begin{align}
    P_e^{*} &\geq \min_{1 \leq i \leq r} \max_{1 \leq j \neq i \leq r} \frac{1}{2} \left(1 - \frac{|p_i-p_j|}{p_i + p_j}-L\Delta\frac{\min(p_i,p_j)}{p_i + p_j}\int_{\mathcal{X_+}}\frac{dp(\xx_+)}{p_+} \int_{\mathcal{X_-}}\frac{dp(\xx_-)}{p_-} \|\xx_+-\xx_-\|_1\right)\\
    &= \min_{1 \leq i \leq r} \max_{1 \leq j \neq i \leq r} \frac{1}{2} \left(1 - \frac{|p_i-p_j|}{p_i + p_j}-L\Delta\frac{\min(p_i,p_j)}{p_i + p_j}d_\text{avg}^{ij}\right)\\
    &= \min_{1 \leq i \leq r} \max_{1 \leq j \neq i \leq r} \frac{\min(p_i,p_j)}{p_i + p_j}\left(1 -\frac{L\Delta d_\text{avg}^{ij}}{2}\right)
\end{align}
In the above derivation, we used the following identity:
\begin{align}
    1 - \frac{|p_i-p_j|}{p_i + p_j}
    &= \begin{cases}
      \cfrac{p_i + p_j-(p_i-p_j)}{p_i + p_j} & \text{if } p_i > p_j\\
      \cfrac{p_i + p_j-(p_j-p_i)}{p_i + p_j} & \text{if } p_i < p_j
    \end{cases} \\
    &= \begin{cases}
      \cfrac{2p_j}{p_i + p_j} & \text{if } p_i > p_j\\
      \cfrac{2p_i}{p_i + p_j} & \text{if } p_i < p_j
    \end{cases} \\
    &= \frac{2\min(p_i,p_j)}{p_i + p_j}.
\end{align}

\section{Proof of the noisy continuity bound}\label{appendix_noisy_continuity}

We will derive a result that will have \cref{lem:noisy_continuity_estimate} as a corollary.
\begin{theorem}
Let 
\begin{align}
    \rho_t &= \calD_p^{\otimes n} \calU_t \calD_p^{\otimes n} \calU_{t-1} \dots \calU_1 \calD_p^{\otimes n} [\rho_0] \\
    \sigma_t &= \calD_p^{\otimes n} \calV_t \calD_p^{\otimes n} \calV_{t-1} \dots \calV_1 \calD_p^{\otimes n} [\rho_0]
\end{align}
be the outputs of two quantum circuits of depth $t$ such that $\lVert \calU_i - \calV_i \rVert_{\diamond} \leq \Gamma_i \leq \Gamma_{\max}$. Then,
\begin{align}
    \lVert \rho_t - \sigma_t \rVert_1 &\leq \Gamma_{\max} \times \begin{cases} t & \text{if } t \leq t_0 \\
    t_0 + \sqrt{2n}\frac{p^{t_0} - p^t}{1-p}&  \text{if } t > t_0
    \end{cases},
\end{align}
where 
\begin{align}
    t_0 = \left\lceil 1+ \frac{1}{2} \frac{\log 2n}{\log \frac{1}{p}} \right\rceil.
\end{align}
\end{theorem}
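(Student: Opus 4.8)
The plan is to run a hybrid (telescoping) argument that swaps the two circuits layer by layer, and to bound each swap by the \emph{minimum} of a naive estimate and a noise-contracted estimate, then sum. Write $\calD \coloneqq \calD_p^{\otimes n}$ for brevity, and for $0 \le k \le t$ let $H_k$ be the output of the circuit using the $\calV$-gates in layers $1,\dots,k$ and the $\calU$-gates in layers $k+1,\dots,t$, so that $H_0 = \rho_t$ and $H_t = \sigma_t$. Telescoping and the triangle inequality give $\lVert \rho_t - \sigma_t\rVert_1 \le \sum_{j=1}^t \lVert H_{j-1} - H_j\rVert_1$. Consecutive hybrids differ only in layer $j$, so $H_{j-1} - H_j = \calF_j[X_j]$ with $X_j \coloneqq (\calU_j - \calV_j)[\mu_{j-1}]$, where $\mu_{j-1}$ is the fixed state fed into layer $j$ and $\calF_j = \calD\,\calU_t\,\calD\cdots\calU_{j+1}\,\calD$ is the noisy tail, which contains $m_j = t-j+1$ depolarizing layers and acts on its input starting with a noise layer.

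First I would record the naive estimate. The operator $X_j = \calU_j[\mu_{j-1}] - \calV_j[\mu_{j-1}]$ is Hermitian, traceless, and a difference of two genuine states; since every CPTP map is a trace-norm contraction, $\lVert H_{j-1}-H_j\rVert_1 = \lVert\calF_j[X_j]\rVert_1 \le \lVert X_j\rVert_1 \le \lVert \calU_j - \calV_j\rVert_{\diamond} \le \Gamma_{\max}$.

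The hard part is the second, noise-contracted estimate, which must \emph{still scale with} $\Gamma_j$: a bound that merely triangulates both outputs of $\calF_j$ to $\omega$ is $\Gamma$-independent and therefore too weak to reproduce the claimed dependence, and useless in exactly the small-$\Gamma_{\max}$ regime that makes the Lemma sharper than the bare continuity bound. The device that resolves this is the Jordan decomposition $X_j = s_j(\alpha_j - \beta_j)$ with $\alpha_j,\beta_j$ states and $s_j = \tfrac12\lVert X_j\rVert_1$. Because $\calF_j$ begins with noise and intersperses only unitaries (which fix $\omega$ and leave $D(\cdot\fatpipe\omega)$ invariant), the relative-entropy decay $D(\calF_j[\cdot]\fatpipe\omega)\le p^{2m_j}D(\cdot\fatpipe\omega)\le p^{2m_j}n$ from the main text together with Pinsker's inequality gives $\lVert\calF_j[\alpha_j]-\omega\rVert_1,\ \lVert\calF_j[\beta_j]-\omega\rVert_1 \le p^{m_j}\sqrt{2n}$. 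Hence $\lVert\calF_j[X_j]\rVert_1 = s_j\lVert\calF_j[\alpha_j]-\calF_j[\beta_j]\rVert_1 \le s_j\cdot 2p^{m_j}\sqrt{2n} = \lVert X_j\rVert_1\,p^{m_j}\sqrt{2n} \le \Gamma_{\max}\sqrt{2n}\,p^{m_j}$, the factor $2$ from triangulating to $\omega$ being absorbed precisely by $s_j=\tfrac12\lVert X_j\rVert_1$. This is the $\Gamma$-scaling contraction bound I need.

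Finally I would combine the two into $\lVert H_{j-1}-H_j\rVert_1 \le \Gamma_{\max}\min\{1,\sqrt{2n}\,p^{m_j}\}$. As $j$ ranges over the layers, $m_j$ ranges over consecutive integers, and the crossover $\sqrt{2n}\,p^{m}\ge 1$ holds exactly when $m\le \tfrac{\log 2n}{2\log(1/p)}$, i.e.\ up to $m\approx t_0-1$; this is the origin of $t_0 = \lceil 1 + \tfrac12\tfrac{\log 2n}{\log(1/p)}\rceil$. Splitting the sum there, the small-$m$ layers each contribute the naive $\Gamma_{\max}$ (yielding $t_0\Gamma_{\max}$) and the large-$m$ layers contribute a geometric series $\Gamma_{\max}\sqrt{2n}\sum_m p^{m} = \Gamma_{\max}\sqrt{2n}\,\tfrac{p^{t_0}-p^{t}}{1-p}$; for $t\le t_0$ the naive bound wins at every layer and one recovers $t\Gamma_{\max}$. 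The only remaining work is bookkeeping: pinning down the exact range of $m_j$ and the off-by-one at the crossover so the series acquires the stated endpoints, and checking that the ceiling in $t_0$ makes the split a valid over-estimate at the boundary. Specializing $\Gamma_j = \Delta\lVert\xx-\xx'\rVert_1$ on the data-encoding steps (and $\Gamma_j=0$ on the purely variational steps, which coincide for the two inputs) then gives \cref{lem:noisy_continuity_estimate} as a corollary, the $(\ell+1)$-spacing of the encoding steps producing the $p^{L_0(\ell+1)}$ powers there.
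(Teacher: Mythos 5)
Your proposal is correct, and it reaches the stated bound by a genuinely different mechanism than the paper, even though both arguments telescope over layers. The paper peels gates from the end of the circuit: using unitary invariance it rewrites the per-layer difference as $(\calV_i^{\dagger}\calU_i - \bbI)[\rho_{i-1}]$, exploits that this map annihilates the fixed point $\omega$ to replace $\rho_{i-1}$ by $\rho_{i-1}-\omega$, and harvests the noise contraction from the layers \emph{before} the swapped gate (the incoming state is already close to $\omega$), yielding per-layer bounds $\Gamma_i \min\{1, p^{i-1}\sqrt{2n}\}$. You instead harvest the noise from the layers \emph{after} the swapped gate, and your Jordan-decomposition device $X_j = s_j(\alpha_j - \beta_j)$ with $s_j = \tfrac{1}{2}\lVert X_j \rVert_1$ is exactly what is needed to retain the $\Gamma$-scaling while the noisy tail washes both states $\alpha_j, \beta_j$ to within $p^{m_j}\sqrt{2n}$ of $\omega$; your observation that a naive triangulation of the two outputs to $\omega$ loses the $\Gamma$-dependence correctly identifies why some such device is unavoidable. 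The two routes are structurally parallel (a minimum of a naive and a contracted bound, then a geometric series), and your deferred bookkeeping does close: with $m_j = t-j+1 \in \{1,\dots,t\}$, splitting the sum at $m = t_0$ gives $\sum_{m=1}^{t}\min\{1,\sqrt{2n}\,p^m\} \leq t_0 + \sqrt{2n}\,p\,(p^{t_0}-p^{t})/(1-p)$, which even improves on the claim by a factor $p$ in the geometric tail -- your exponents run over $1,\dots,t$ rather than the paper's $0,\dots,t-1$, because the tail of the $j$-th swap contains $t-j+1$ noise layers whereas the paper credits the head of layer $i$ with only $i-1$. A minor extra benefit of your route is that it never inverts a gate: it only uses trace-norm contraction and data processing for $D(\cdot \fatpipe \omega)$, so it applies verbatim when the $\calU_i, \calV_i$ are arbitrary unital channels, whereas the paper's rotation by $\calV_t^{\dagger}$ presumes unitarity.
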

\begin{proof}
We will use the same strategy that establishes the continuity bound of \cref{thm:diamond_norm_continuity}, however, in the course of the derivation we will make use of the fact that the maximally mixed state is a fixed point of any unitary, and hence $(\calV^{\dagger} \calU - \bbI)[\omega] = 0$ to improve our estimate.
\begin{align}
    \lVert \rho_t - \sigma_t \rVert_1 &= \lVert \calD_p^{\otimes n} (\calU_t[\rho_{t-1}] - \calV_t[\sigma_{t-1}])\rVert_1 \\
    &\leq \lVert \calU_t[\rho_{t-1}] - \calV_t[\sigma_{t-1}]\rVert_1 \\
    &= \lVert \calV_t^{\dagger}\calU_t[\rho_{t-1}] - \sigma_{t-1} \rVert_1\\
    &= \lVert (\calV_t^{\dagger}\calU_t - \bbI)[\rho_{t-1}] + \rho_{t-1} - \sigma_{t-1} \rVert_1 \\
    &= \lVert (\calV_t^{\dagger}\calU_t - \bbI)[\rho_{t-1} - \omega] + \rho_{t-1} - \sigma_{t-1} \rVert_1 \\
    &\leq \lVert \calV_t^{\dagger}\calU_t - \bbI \rVert_{\diamond} \min\{ \lVert \rho_{t-1} \rVert_1, \lVert \rho_{t-1} - \omega \rVert_1 \} + \lVert \rho_{t-1} - \sigma_{t-1} \rVert_1\\
    &\leq \Gamma_t \min\{ 1, p^{t-1} \sqrt{2n}\} + \lVert \rho_{t-1} - \sigma_{t-1} \rVert_1\\
    &\leq \sum_{i=1}^t \Gamma_i \min\{ 1, p^{i-1} \sqrt{2n}\} \\
    &\leq \Gamma_{\max}  \sum_{i=1}^t \min\{ 1, p^{i-1} \sqrt{2n}\} 
\end{align}
To complete our proof, note that we have a geometric series for the terms that are powers in $p$ which gives
\begin{align}
    \sum_{i=1}^t p^{i-1} = \frac{1 - p^t}{1-p}.
\end{align}
As the terms $p^{i-1}$ are monotonically decreasing in $i$, the noisy estimate will be better as soon as $i \geq t_0$ where $t_0$ solves
\begin{align}
    1 \geq p^{t_0-1}\sqrt{2n} \ \Leftrightarrow \ t_0 = \left\lceil 1+ \frac{1}{2} \frac{\log 2n}{\log \frac{1}{p}} \right\rceil.
\end{align}
The final bound is thus an improved version of the continuity version that saturates asymptotically:
\begin{align}
    \lVert \rho_t - \sigma_t \rVert_1 &\leq \Gamma_{\max} \times \begin{cases} t & \text{if } t \leq t_0 \\
    t_0 + \sqrt{2n}\left(\frac{1 - p^t}{1-p} - \frac{1 - p^{t_0}}{1-p}\right) &  \text{if } t > t_0
    \end{cases} \\
    &= \Gamma_{\max} \times \begin{cases} t & \text{if } t \leq t_0 \\
    t_0 + \sqrt{2n}\frac{p^{t_0} - p^t}{1-p}&  \text{if } t > t_0
    \end{cases}
\end{align}
\end{proof}
We can now specialize the above Theorem to the setting of the main text.
\begin{proof}[Proof of \cref{lem:noisy_continuity_estimate}]
    In the setting of \cref{lem:noisy_continuity_estimate}, the total circuit depth is $t = L(\ell +1)$, but only $L$ of the computational steps where data is encoded have a non-zero upper bound on the diamond norm distance given by $\Gamma_{\max} \leq \Delta \lVert \xx - \xx' \rVert_1$. We therefore obtain an estimate of the form
    \begin{align}
        \lVert \rho_{\ttheta}(\xx) - \rho_{\ttheta}(\xx') \rVert_1 \leq 
        \Gamma_{\max} \sum_{i=1}^L \min\{ 1, p^{\ell(i-1)} \sqrt{2n} \}, 
    \end{align}
    where $i$ now counts the number of layers. The geometric series then takes the form
    \begin{align}
        \sum_{i=1}^L p^{(i-1)(\ell+1)}  = \sum_{i=1}^L (p^{\ell +1})^{i-1} = \frac{1 - p^{L (\ell + 1)}}{1 - p^{\ell + 1}}.
    \end{align}
    and we need to solve the number of layers $L_0$ for
    \begin{align}
        1 \geq p^{(L_0 -1)(\ell +1) }\sqrt{2n} \ \Leftrightarrow \ L_0 = \left\lceil 1 + \frac{1}{2(\ell+1)} \frac{\log 2n}{\log \frac{1}{p}} \right\rceil.
    \end{align}
    Combining everything, we obtain the stated bound.
\end{proof}

\end{document}